\documentclass[12pt,onecolumn,journal,draftclsnofoot]{IEEEtran}
\usepackage{epsf,psfrag,amssymb,amsfonts,amsmath,graphicx,times,color}
\usepackage[square, comma, sort&compress]{natbib}




\def\boxit#1{\vbox{\hrule\hbox{\vrule\kern3pt
        \vbox{\kern3pt#1\kern3pt}\kern3pt\vrule}\hrule}}

\def\reals{ { {\rm  I \kern-0.15em R }  } }
\def\complex{ {\,{{\rm C} \kern-0.50em \raise0.20ex {  |}}\, }}

\def\Rbf{{\bf R}}

\def\Sc{{\cal S}}

\def\be{\begin{equation}}
\def\ee{\end{equation}}

\def\scalefig#1{\epsfxsize #1\textwidth}
%

\def\Rxx{\Rbf_{\ssstyle X\kern-.1em X}}

\let\ssstyle=\scriptscriptstyle


\def\ie{{\it i.e.,\ \/}}
\def\Kout{\setbox1=\hbox{\Huge\bf K}\hbox to
1.05\wd1{\hspace{.05\wd1}
\def\Sout{\setbox1=\hbox{\Huge\bf S}\hbox to 1.05\wd1{\hspace{.05\wd1}

\def\ie{{\it i.e.,\ \/}}

\newtheorem{proposition}{Proposition}
\newtheorem{fact}{Fact}
\newtheorem{property}{Property}
\newtheorem{theorem}{Theorem}
\def\scalefig#1{\epsfxsize #1\textwidth}
\def\nn{{\nonumber}}

\begin{document}
\setcounter{page}{0}

\title{Power Control in Cognitive Radio Networks: How to Cross a Multi-Lane Highway\thanks{This work was supported in part by the Army
Research Laboratory CTA on Communication and Networks under Grant
DAAD19-01-2-0011 and by the National Science Foundation under Grants
ECS-0622200 and CCF-0830685. Part of this work was presented at {\it
ITA}, January 2008, and {\it ICASSP}, April 2008.}}
\author{Wei Ren, ~~~ Qing Zhao$^*$, ~~~ Ananthram Swami\thanks{W. Ren and Q. Zhao are with the
Department of Electrical and Computer Engineering, University of
California, Davis, CA 95616. A. Swami is with the Army Research
Laboratory, Adelphi, MD 20783.}
\thanks{$^*$ Corresponding author. Phone: 1-530-752-7390. Fax: 1-530-752-8428. Email: qzhao@ece.ucdavis.edu}}

\maketitle%
\thispagestyle{empty}

\begin{abstract}
We consider power control in cognitive radio networks where
secondary users identify and exploit instantaneous and local
spectrum opportunities without causing unacceptable interference to
primary users. We qualitatively characterize the impact of the
transmission power of secondary users on the occurrence of spectrum
opportunities and the reliability of opportunity detection. Based on
a Poisson model of the primary network, we quantify these impacts by
showing that (i) the probability of spectrum opportunity decreases
exponentially with respect to the transmission power of secondary
users, where the exponential decay constant is given by the traffic
load of primary users; (ii) reliable opportunity detection is
achieved in the two extreme regimes in terms of the ratio between
the transmission power of secondary users and that of primary users.
Such analytical characterizations allow us to study power control
for optimal transport throughput under constraints on the
interference to primary users. Furthermore, we reveal the difference
between detecting primary signals and detecting spectrum
opportunities, and demonstrate the complex relationship between
physical layer spectrum sensing and MAC layer throughput. The
dependency of this PHY-MAC interaction on the application type and
the use of handshake signaling such as RTS/CTS is illustrated.
\end{abstract}

\begin{IEEEkeywords}
Power Control, Cognitive Radio, Opportunistic Spectrum Access,
Transport Throughput, Poisson Random Network.
\end{IEEEkeywords}

\newpage

\section{Introduction}
\label{sec:intro} \vspace{-0.5em}

Cognitive radio (CR) for opportunistic spectrum access (OSA)
addresses critical challenges in spectrum efficiency, interference
management, and coexistence of heterogeneous networks in future
generations of wireless systems~\citep{Mitola&Maguire:99}. The basic
idea of OSA is to achieve spectrum efficiency and interoperability
through a hierarchical access structure with primary and secondary
users. Secondary users, equipped with cognitive radios capable of
sensing and learning the communication environment, can identify and
exploit instantaneous and local spectrum opportunities without
causing unacceptable interference to primary
users~\citep{Zhao&Sadler:07SPM}.

While conceptually simple, CR for OSA presents new challenges in
every aspect of the system design. In this paper, we focus on
transmission power control. We show that unique features of CR
systems give a fresh twist to this classic problem and call for a
new set of fundamental theories and practical insights for optimal
design. \vspace{-1em}

\subsection{Power Control in Cognitive Radio Networks}
\vspace{-0.5em}

In wireless networks, transmission power defines network topology
and determines network capacity. The tradeoff between long-distance
direct transmission and multi-hop relaying, both in terms of energy
efficiency and network capacity, is now well understood in
conventional wireless networks~\citep{Haenggi&Puccinelli:05COMMag,
Gupta&Kumar:00IT}.

This tradeoff in CR systems is, however, much more complex. The
intricacies of power control in CR systems may be illustrated with
an analogy of crossing a multi-lane highway, each lane having
different traffic load. The objective is to cross the highway as
fast as possible subject to a risk constraint. Should we wait until
all lanes are clear and dash through, or cross one lane at a time
whenever an opportunity arises? What if our ability to detect
traffic in multiple lanes varies with the number of lanes in
question?

We show in this paper that similar questions arise in power control
for secondary users. The transmission power of a secondary user not
only determines its communication range but also affects how often
it sees spectrum opportunities. If a secondary user is to use a high
power to reach its intended receiver directly, it must wait for the
opportunity that no primary receiver is active within its relatively
large interference region, which happens less often. If, on the
other hand, it uses low power, it must rely on multi-hop relaying,
and each hop must wait for its own opportunities to emerge.

A less obvious implication of the transmission power in CR networks
is its impact on the reliability of opportunity detection. As shown
in this paper, the transmission power of a secondary user affects
the performance of its opportunity detector in terms of missed
spectrum opportunities and collisions with primary users. Optimal
power control in CR systems thus requires a careful analysis of the
impact of the transmission power on both the occurrence of
opportunities and the reliability of opportunity detection.
\vspace{-1em}

\subsection{Contributions}
\vspace{-0.5em}

The key contribution of this paper lies in the characterization of
the impact of secondary users' transmission power on the occurrence
of spectrum opportunities and the reliability of opportunity
detection. These impacts of secondary users' transmission power lead
to unique design tradeoffs in CR systems that are nonexistent in
conventional wireless networks and have not been recognized in the
literature of cognitive radio. The recognition and characterization
of these tradeoffs contribute to the fundamental understanding of CR
systems and clarify two major misconceptions in the CR literature,
namely, that the presence/absence of spectrum opportunities is
solely determined by primary \emph{transmitters} and that detecting
primary signals is equivalent to detecting spectrum opportunities.
We show in this paper the crucial role of primary receivers in the
definition of spectrum opportunity, which results in the dependency
of the occurrence of spectrum opportunities on the transmission
power of \emph{secondary users}, in addition to the well understood
dependency on the transmission power of primary users. Furthermore,
we show that spectrum opportunity detection is subject to error even
when primary signals can be perfectly detected. Such an
non-equivalence between detecting primary signals and detecting
spectrum opportunities is the root for the connection between the
reliability of opportunity detection and the transmission power of
secondary users, a connection that has eluded the CR research
community thus far.

The above qualitative and conceptual findings are generally
applicable to various primary and secondary network architectures,
different traffic, signal propagation, and interference models. To
quantify the impact of transmission power on the occurrence of
opportunities and the reliability of opportunity detection, we adopt
a Poisson model of the primary network and a disk model for signal
propagation and interference. Closed-form expressions for the
probability of opportunity and the performance of opportunity
detection (measured by the probabilities of false alarm and miss
detection) are obtained. These closed-form expressions allow us to
establish the exponential decay of the probability of opportunity
with respect to the transmission power and the asymptotical behavior
of the performance of opportunity detection. Specifically, we show
that the probability of opportunity decreases exponentially with
$p_{tx}^{2/\alpha}$, where $p_{tx}$ is the transmission power of
secondary users and $\alpha$ is the path-loss exponent. In terms of
the impact of $p_{tx}$ on spectrum sensing, we show that reliable
opportunity detection is achieved in the {\em two extreme regimes of
the ratio between the transmission power $p_{tx}$ of secondary users
and the transmission power $P_{tx}$ of primary users:
$\frac{p_{tx}}{P_{tx}}\rightarrow 0$ and
$\frac{p_{tx}}{P_{tx}}\rightarrow\infty$}. These quantitative
characterizations lead to a systematic study of optimal power
control in CR systems. Adopting the performance measure of {\em
transport throughput}, we examine how a secondary user should choose
its transmission power according to the interference constraint, the
traffic load and transmission power of primary users, and its own
application type (guaranteed delivery vs. best-effort delivery).

While the disk propagation and interference model is simplistic, it
leads to clean tractable solutions that highlight the main message
regarding the dependencies of the definition, the occurrence, and
the detection of spectrum opportunities on the transmission power of
secondary users. It is our hope that this paper provides insights
for characterizing such dependencies under more complex and more
realistic network and interference models.

Other interesting findings include the difference between detecting
primary signals and detecting spectrum opportunities and how it
affects the performance of spectrum sensing. Furthermore, we
demonstrate the complex relationship between physical layer spectrum
sensing and MAC layer throughput. The dependence of this PHY-MAC
interaction on the application type and the use of handshake
signaling such as RTS/CTS are illustrated. %
\vspace{-1em}

\subsection{Related Work}
\vspace{-0.5em}

The main objectives of power control in conventional wireless
networks are to improve the energy efficiency by appropriately
reducing the transmission power without degrading the link
throughput and/or to increase the total throughput of the network by
enhancing the spatial reuse of the channel~\citep{Krunz&Etal:04INet,
Monks&Etal:00ILCN}. Under these objectives, the tradeoff in power
control between fewer hops and spatial reuse is demonstrated
in~\citep{Haenggi&Puccinelli:05COMMag, Gupta&Kumar:00IT}, and the
impact of transmission power on network performance, such as delay,
connectivity, network throughput, is summarized
in~\citep{Krunz&Etal:04INet, Pantazis&Vergados:07ICST}.

Power control in CR systems has been studied under different network
setups and various performance measures. The design tradeoffs in
power control in conventional wireless networks can still be found
in CR systems. The unique design tradeoffs in power control in CR
systems, \ie the impact of transmission power on the occurrence of
opportunities and the reliability of opportunity detection, however,
has not been recognized or analytically characterized in the
literature.

In~\citep{Srinivasa&Jafar:07GLOCOM, Chen&Etal:08JWC}, power control
for one pair of secondary users coexisting with one pair of primary
users is considered. The use of soft sensing information for optimal
power control is explored in~\citep{Srinivasa&Jafar:07GLOCOM} to
maximize the capacity/SNR of the secondary user under a peak power
constraint at the secondary transmitter and an average interference
constraint at the primary receiver. In~\citep{Chen&Etal:08JWC}, the
secondary transmitter adjusts its transmission power to maximize its
data rate without increasing the outage probability at the primary
receiver. It is assumed in~\citep{Chen&Etal:08JWC} that the channel
gain between the primary transmitter and its receiver is known to
the secondary user. In~\citep{Gao&etal:07GLOCOM}, a power control
strategy based on dynamic programming is developed for one pair of
secondary users under a Markov model of primary users' spectrum
usage.

Power control for OSA in TV bands is investigated
in~\citep{Qian&etal:07LANMAN, Islam&Etal:08JWC}, where the primary
users (TV broadcast) transmit all the time and spatial (rather than
temporal) spectrum opportunities are exploited by secondary users. %
\vspace{-1em}

\subsection{Organization and Notation}
\vspace{-0.5em}

The rest of this paper is organized as follows. Sec.~\ref{sec:SODAD}
provides a qualitative characterization of the impact of
transmission power in CR systems, and lays out the conceptual
foundation for subsequent sections. In Sec.~\ref{sec:Imp_Tx_Pow},
closed-form expressions and properties of the probability of
spectrum opportunity and the performance of opportunity detection
are obtained as functions of the transmission power $p_{tx}$ based
on a Poisson primary network model. Based on these analytical
results, we study power control for optimal transport throughput in
Sec.~\ref{sec:power_trans_throughput}. The impact of RTS/CTS
handshake signaling on the performance of opportunity detection and
the optimal transmission power is examined in
Sec.~\ref{sec:RTS_CTS_LBT}. Sec.~\ref{sec:conclu} concludes the
paper, which also includes some simulation results on the energy
detector and some comments about the effect of multiple secondary
users.

Throughout the paper, we use capital letters for parameters of
primary users and lowercase letters for secondary users. %
\vspace{-1em}

\section{Impact of Transmission Power: Qualitative Characterization}
\label{sec:SODAD} \vspace{-0.5em}

This section lays out the conceptual foundation for subsequent
sections. The impact of transmission power on the occurrence of
opportunities is revealed through a careful examination of the
definitions of spectrum opportunity and interference constraint. The
impact of transmission power on the reliability of opportunity
detection is demonstrated by illuminating the difference between
detecting primary signals and detecting spectrum opportunities. %
\vspace{-1em}

\subsection{Impact on the Occurrence of Spectrum Opportunity}
\label{subsec:Def_SO} \vspace{-0.5em}

A formal investigation of CR systems must start from a clear
definition of spectrum opportunity and interference constraint. To
protect primary users, an interference constraint should specify at
least two parameters $\{\eta,~\zeta\}$. The first parameter $\eta$
is the maximum allowable interference power perceived by an active
primary receiver; it specifies the noise floor and is inherent to
the definition of spectrum opportunity as shown below. The second
parameter $\zeta$ is the maximum outage probability that the
interference at an active primary receiver exceeds the noise floor.
Allowing a positive outage probability $\zeta$ is necessary due to
opportunity detection errors. This parameter is crucial to secondary
users in making transmission decisions based on imperfect spectrum
sensing as shown in~\citep{Chen&etal:07ITsub}.

Spectrum opportunity is a local concept defined with respect to a
particular secondary transmitter and its receiver. Intuitively, {\em
a channel is an opportunity to a pair of secondary users if they can
communicate successfully without violating the interference
constraint\footnote{Here we use channel in a general sense: it
represents a signal dimension (time, frequency, code, etc.) that can
be allocated to a particular user.}.} In other words, the existence
of a spectrum opportunity is determined by two logical conditions:
the reception at the secondary receiver being successful and the
transmission from the secondary transmitter being ``harmless''.
Deceptively simple, this definition has significant implications for
CR systems where primary and secondary users are geographically
distributed and wireless transmissions are subject to path loss and
fading.

For a simple illustration, consider a pair of secondary users ($A$
and $B$) seeking to communicate in the presence of primary users as
shown in Fig.~\ref{fig:SO}. A channel is an opportunity to $A$ and
$B$ if the transmission from $A$ does not interfere with nearby {\it
primary receivers} in the solid circle, and the reception at $B$ is
not affected by nearby {\it primary transmitters} in the dashed
circle. The radius $r_I$ of the solid circle at $A$, referred to as
the interference range of the secondary user, depends on the
transmission power of $A$ and the parameter $\eta$ of the
interference constraint, whereas the radius $R_I$ of the dashed
circle (the interference range of primary users) depends on the
transmission power of primary users and the interference tolerance
of $B$.

The use of a circle to illustrate the interference region is
immaterial. This definition applies to a general signal propagation
and interference model by replacing the solid and dashed circles
with interference footprints specifying, respectively, the subset of
primary receivers who are potential victims of $A$'s transmission
and the subset of primary transmitters who can interfere with
reception at $B$. The key message is that spectrum opportunities
depend on both transmitting and receiving activities of primary
users. Spectrum opportunity is thus a function of (i) the
transmission powers of both primary and secondary users, (ii) the
geographical locations of these users, and (iii) the interference
constraint. Notice also that spectrum opportunities are asymmetric.
A channel that is an opportunity when $A$ is the transmitter and $B$
the receiver may not be an opportunity when $B$ is the transmitter
and $A$ the receiver. This asymmetry leads to a complex dependency
of the optimal transmission power on the application type and the
use of MAC handshake signaling such as RTS/CTS as shown in
Sec.~\ref{sec:power_trans_throughput} and
Sec.~\ref{sec:RTS_CTS_LBT}.

It is clear from the definition of spectrum opportunity that a
higher transmission power (larger $r_I$ in Fig.~\ref{fig:SO}) of the
secondary user requires the absence of active primary receivers over
a larger area, which occurs less often. The impact of transmission
power on the occurrence of opportunity thus follows directly.
\vspace{-1em}

\subsection{Impact on the Performance of Opportunity Detection}
\label{subsec:Det_Oppor} \vspace{-0.5em}

Spectrum opportunity detection can be considered as a binary
hypothesis test. We adopt here the disk signal propagation and
interference model as illustrated in Fig.~\ref{fig:SO}. The basic
concepts presented here, however, apply to a general model.

Let $\mathbb{I}(A,d,\textrm{rx})$ denote the logical condition that
there exist primary receivers within distance $d$ to the secondary
user $A$. Let $\overline{\mathbb{I}(A,d,\textrm{rx})}$ denote the
complement of $\mathbb{I}(A,d,\textrm{rx})$. The two hypotheses for
opportunity detection are then given by %
\vspace{-0.5em} {\small
\begin{eqnarray*}
&\mathcal{H}_0& : \textrm{ opportunity exists}, \ie
\overline{\mathbb{I}(A,r_I,\textrm{rx})}\cap
\overline{\mathbb{I}(B,R_I,\textrm{tx})}, \\
&\mathcal{H}_1& : \textrm{ no opportunity}, \ie
\mathbb{I}(A,r_I,\textrm{rx})\cup \mathbb{I}(B,R_I,\textrm{tx}),
\end{eqnarray*}}
where $\mathbb{I}(B,R_I,\textrm{tx})$ and
$\overline{\mathbb{I}(B,R_I,\textrm{tx})}$ are similarly defined,
and $R_I$ and $r_I$ are, respectively, the interference range of
primary and secondary users under the disk model. Notice that
$\overline{\mathbb{I}(A,r_I,\textrm{rx})}$ corresponds to the
logical condition on the transmission from $A$ being ``harmless'',
and $\overline{\mathbb{I}(B,R_I,\textrm{tx})}$ the logical condition
on the reception at $B$ being successful.

Detection performance is measured by the probabilities of false
alarm $P_F$ and miss detection $P_{MD}$: $P_F = \Pr\{\textrm{decides
}\mathcal{H}_1~|~\mathcal{H}_0\}$, $P_{MD} = \Pr\{\textrm{decides
}\mathcal{H}_0~|~\mathcal{H}_1\}$. The tradeoff between false alarm
and miss detection is captured by the receiver operating
characteristic (ROC), which gives $P_D=1-P_{MD}$ (probability of
detection or detection power) as a function of $P_F$. In general,
reducing $P_F$ comes at the price of increasing $P_{MD}$ and {\it
vice versa}. Since false alarms lead to overlooked spectrum
opportunities and miss detections are likely to result in collisions
with primary users, the tradeoff between false alarm and miss
detection is crucial in the design of CR
systems~\citep{Chen&etal:07ITsub}.

Without assuming cooperation from primary users, the observations
available to the secondary user for opportunity detection are the
signals emitted from primary {\em transmitters}. This basic approach
to opportunity detection is commonly referred to as
``listen-before-talk'' (LBT). As shown in Fig.~\ref{fig:OD}, $A$
infers the existence of spectrum opportunity from the absence of
primary transmitters within its detection range $r_D$, where $r_D$
can be adjusted by changing, for example, the threshold of an energy
detector. The probabilities of false alarm $P_F$ and miss detection
$P_{MD}$ for LBT are thus given by \vspace{-0.5em} {\small
\begin{eqnarray}
P_F = \Pr\{\mathbb{I}(A,r_D,\textrm{tx})~|~\mathcal{H}_0\},~~~~
P_{MD} =
\Pr\{\overline{\mathbb{I}(A,r_D,\textrm{tx})}~|~\mathcal{H}_1\}.
\end{eqnarray}}
\vspace{-2em}

Uncertainties, however, are inherent to such a scheme even if $A$
listens to primary signals with perfect ears ($\ie$ perfect
detection of primary transmitters within its detection range $r_D$).
Even in the absence of noise and fading, the geographic distribution
and traffic pattern of primary users have significant impact on the
performance of LBT. Specifically, there are three possible sources
of detection errors: hidden transmitters, hidden receivers and
exposed transmitters. A {\em hidden transmitter} is a primary
transmitter that is located within distance $R_I$ to $B$ but outside
the detection range of $A$ (node $X$ in Fig.~\ref{fig:OD}). A {\em
hidden receiver} is a primary receiver that is located within the
interference range $r_I$ of $A$ but its corresponding primary
transmitter is outside the detection range of $A$ (node $Y$ in
Fig.~\ref{fig:OD}). An {\em exposed transmitter} is a primary
transmitter that is located within the detection range of $A$ but
transmits to a primary receiver outside the interference range of
$A$ (node $Z$ in Fig.~\ref{fig:OD}). For the scenarios shown in
Fig.~\ref{fig:OD}, even if $A$ can perfectly detect the presence of
signals from any primary transmitters located within its detection
range $r_D$, the transmission from the exposed transmitter $Z$ is a
source of false alarms whereas the transmission from the hidden
transmitter $X$ and the reception at the hidden receiver $Y$ are
sources of miss detections. As illustrated in Fig.~\ref{fig:ROC},
adjusting the detection range $r_D$ leads to different points on the
ROC. It is obvious from (1) that $P_F$ increases but $P_{MD}$
decreases as $r_D$ increases.

From the definition of spectrum opportunity, when
$\frac{p_{tx}}{P_{tx}}$ is small (small $\frac{r_I}{R_I}$ in
Fig.~\ref{fig:SO}), the occurrence of spectrum opportunity is mainly
determined by the logical condition on the reception at $B$ being
successful. In this case, the error in detecting opportunity is
mainly caused by hidden transmitters. Since the distance between $A$
and $B$ is relatively small due to the small transmission power, $A$
can accurately infer the presence of primary transmitters in the
neighborhood of $B$, leading to reliable opportunity detection. On
the other hand, when $\frac{p_{tx}}{P_{tx}}$ is large (large
$\frac{r_I}{R_I}$ in Fig.~\ref{fig:SO}), the occurrence of spectrum
opportunity is mainly determined by the logical condition on the
transmission from $A$ being ``harmless''. Due to the relatively
small transmission power of the primary users, primary receivers are
close to their corresponding transmitters. Node $A$ can thus
accurately infer the presence of primary receivers from the presence
of primary transmitters and achieve reliable opportunity detection.
To summarize, in the two extreme regimes in terms of the ratio
between the transmission power of secondary users and that of
primary users, the two logical conditions for spectrum opportunity
reduce to one. As a consequence, reducing $P_F$ does not necessarily
increase $P_{MD}$, and perfect spectrum opportunity detection is
achieved. A detailed proof of this statement is given in
Sec.~\ref{subsec:Imp_Tx_Pow_Det}. Note that we focus on detection
errors caused by the inherent uncertainties associated with
detecting spectrum opportunities by detecting primary transmitters.
Such uncertainties vary with the transmission power of the primary
and secondary users. We ignore noise and fading that may cause
errors in detecting primary transmitters, since they are not
pertinent to the issue of power control for secondary users.
\vspace{-1em}

\section{Impact of Transmission Power: Quantitative Characterization}
\label{sec:Imp_Tx_Pow} \vspace{-0.5em}

In this section, we quantitatively characterize the impact of the
transmission power $p_{tx}$ of secondary users by deriving
closed-form expressions for the probability of opportunity and the
performance of opportunity detection as functions of $p_{tx}$ in a
Poisson primary network. The exponential decay of the probability of
opportunity with respect to $p_{tx}^{2/\alpha}$ and the asymptotic
behavior of ROC are established based on these closed-form
expressions. \vspace{-1em}

\subsection{A Poisson Random Network Model}
\label{subsec:Poisson_Model} \vspace{-0.5em}

Consider a decentralized primary network, where potential primary
transmitters are distributed according to a two-dimensional
homogeneous Poisson process with density $\lambda$. At the beginning
of each slot, each potential primary transmitter has a probability
$p$ to transmit, and receivers are in turn located uniformly within
the transmission range $R_p$ of each transmitter. In the following
analysis, we will frequently use the following two classic results
on Poisson processes.
\begin{fact}{\it Independent Coloring (Thinning)
Theorem}~[14, Chapter 5] \\
Let $\Pi$ be a potentially inhomogeneous Poisson process on
$\mathbb{R}^d$ with density function $\lambda(\mathbf{x})$, where
$\mathbf{x}=(x_1,x_2,...,x_d)\in \mathbb{R}^d$. Suppose that we
obtain $\Pi'$ by independently coloring points $\mathbf{x}\in \Pi$
according to probabilities $p(\mathbf{x})$. Then $\Pi'$ and
$\Pi-\Pi'$ are two independent Poisson processes with density
functions $p(\mathbf{x})\lambda(\mathbf{x})$ and
$(1-p(\mathbf{x}))\lambda(\mathbf{x})$, respectively.
\end{fact}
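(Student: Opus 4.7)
The plan is to reduce the claim to the two defining properties of a (possibly inhomogeneous) Poisson process: counts in disjoint Borel sets are independent, and the count in a Borel set $A$ is Poisson with mean $\int_A \mu(\mathbf{x})d\mathbf{x}$ for the appropriate intensity $\mu$. Accordingly, I would establish that (i) $\Pi'$ is Poisson with intensity $p(\mathbf{x})\lambda(\mathbf{x})$, (ii) $\Pi\setminus\Pi'$ is Poisson with intensity $(1-p(\mathbf{x}))\lambda(\mathbf{x})$, and (iii) for every bounded Borel set $A$ the counts $N'(A)\defeq|\Pi'\cap A|$ and $N''(A)\defeq|(\Pi\setminus\Pi')\cap A|$ are independent. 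Disjoint-set independence inherited from $\Pi$ together with (iii) then delivers full independence of the two colored processes.

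The technical core is step (iii) on a bounded Borel set $A$ with $\Lambda(A)\defeq\int_A\lambda(\mathbf{x})d\mathbf{x}<\infty$. First I would invoke the classical conditional property: given $N(A)\defeq|\Pi\cap A|=n$, the $n$ points are i.i.d.\ with density $\lambda(\mathbf{x})/\Lambda(A)$ on $A$. Since the coloring is performed independently of the point locations, each of these $n$ points is retained in $\Pi'$ with probability $q(A)=\Lambda(A)^{-1}\int_A p(\mathbf{x})\lambda(\mathbf{x})d\mathbf{x}$, independently across points. Hence $N'(A)\mid N(A)=n$ is Binomial$(n,q(A))$ and $N''(A)=N(A)-N'(A)$. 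Multiplying this conditional law by the Poisson$(\Lambda(A))$ mass of $N(A)$ and simplifying the binomial coefficients yields
\begin{equation*}
\Pr\{N'(A)=j,\, N''(A)=k\} = \frac{\mu_1^{j} e^{-\mu_1}}{j!}\cdot\frac{\mu_2^{k} e^{-\mu_2}}{k!},
\end{equation*}
where $\mu_1=\int_A p(\mathbf{x})\lambda(\mathbf{x})d\mathbf{x}$ and $\mu_2=\int_A(1-p(\mathbf{x}))\lambda(\mathbf{x})d\mathbf{x}$. This single factorization gives both the marginal Poisson laws on $A$ and their independence.

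To upgrade to joint independence across disjoint Borel sets $A_1,\ldots,A_k$, I would observe that the counts $\{N(A_j)\}_{j=1}^{k}$ are already independent by the Poisson property of $\Pi$, and that the coloring is performed point-by-point independently. Conditioning on the tuple $(N(A_j))_j$, the $2k$ counts $(N'(A_j),N''(A_j))_{j=1}^{k}$ are therefore conditionally independent across $j$, and within each $j$ they factor by the computation above. Consequently all $2k$ random variables are mutually independent Poisson variables with the claimed means, which is exactly the finite-dimensional characterization of two independent Poisson processes with intensities $p(\mathbf{x})\lambda(\mathbf{x})$ and $(1-p(\mathbf{x}))\lambda(\mathbf{x})$.

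The main obstacle is conceptual rather than computational: one must justify that the conditional-uniform property used above carries over from the homogeneous to the inhomogeneous setting (it does, with the $n$ points distributed according to the normalized intensity $\lambda/\Lambda$), and that the bounded-set analysis extends to all of $\mathbb{R}^{d}$ when $\lambda$ is only locally integrable. The latter is handled by a standard $\sigma$-finite exhaustion through bounded sets followed by a monotone-class argument on the finite-dimensional distributions, so no genuinely new ideas are required beyond the bounded-set factorization.
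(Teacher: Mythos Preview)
Your argument is correct and is essentially the classical proof of the coloring/thinning theorem (conditioning on the total count, using the order-statistics property to reduce to i.i.d.\ Bernoulli labels, and then splitting the Poisson via the binomial identity). One small point worth making explicit: to pass from ``$N'(A)$ and $N''(A)$ are independent for each $A$, and independent across disjoint $A_j$'' to full independence of the two random measures, you implicitly rely on reducing arbitrary test sets to a common disjoint refinement; this is routine but is the place where the monotone-class step you mention actually does work.

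However, note that the paper does not prove this statement at all. Fact~1 is quoted verbatim as a known result from Kingman's \emph{Poisson Processes} (reference~[14], Chapter~5) and is used as a black box throughout the subsequent derivations (Propositions~1--4). So there is no ``paper's own proof'' to compare against: your proposal supplies a proof where the authors simply cite one. If your goal is to match the paper, a one-line citation suffices; if your goal is a self-contained write-up, what you have is the standard route and is fine as written.
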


\begin{fact}{\it Displacement Theorem}~[14, Chapter 5] \\
Let $\Pi$ be a Poisson process on $\mathbb{R}^d$ with density
function $\lambda(\mathbf{x})$. Suppose that the points of $\Pi$ are
displaced randomly and independently. Let
$\rho(\mathbf{x},\mathbf{y})$ denote the probability density of the
displaced position $\mathbf{y}$ of a point $\mathbf{x}$ in $\Pi$.
Then the displaced points form a Poisson process $\Pi'$ with density
function $\lambda'$ given by
$\lambda'(\mathbf{y})=\int_{\mathbb{R}^d}
\lambda(\mathbf{x})\rho(\mathbf{x},\mathbf{y})~\mathrm{d}\mathbf{x}$.
In particular, if $\lambda(\mathbf{x})$ is a constant $\lambda$ and
$\rho(\mathbf{x},\mathbf{y})$ is a function of
$\mathbf{y}-\mathbf{x}$, then $\lambda'(\mathbf{y})=\lambda$ for all
$\mathbf{y}\in \mathbb{R}^d$.
\end{fact}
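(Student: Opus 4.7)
The plan is to reduce Fact 2 to Fact 1 (Independent Coloring) by reinterpreting each random displacement as an independent marking of the original points, and then verifying the two defining properties of a Poisson process for the displaced point set $\Pi'$: a Poisson count distribution on every bounded Borel set, and mutual independence of counts across disjoint sets.

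First I would fix an arbitrary bounded Borel set $B\subset\mathbb{R}^d$ and color each point $\mathbf{x}\in\Pi$ ``blue'' when its displaced position $\mathbf{y}$ falls into $B$, and ``red'' otherwise. Because the displacements are mutually independent, this is an independent coloring in the sense of Fact 1, with position-dependent coloring probability $p_B(\mathbf{x})=\int_B \rho(\mathbf{x},\mathbf{y})\,\mathrm{d}\mathbf{y}$. Applying Fact 1 then yields that the blue points form an inhomogeneous Poisson process with density $p_B(\mathbf{x})\lambda(\mathbf{x})$; in particular the total number of blue points --- which is exactly $|\Pi'\cap B|$ --- is Poisson with mean $\int_{\mathbb{R}^d} p_B(\mathbf{x})\lambda(\mathbf{x})\,\mathrm{d}\mathbf{x}=\int_B\!\int_{\mathbb{R}^d}\lambda(\mathbf{x})\rho(\mathbf{x},\mathbf{y})\,\mathrm{d}\mathbf{x}\,\mathrm{d}\mathbf{y}=\int_B\lambda'(\mathbf{y})\,\mathrm{d}\mathbf{y}$, where the interchange of integration order is justified by Fubini since the integrand is nonnegative.

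Next, to establish mutual independence of the counts across disjoint bounded Borel sets $B_1,\dots,B_k$, I would extend the above coloring to $k+1$ colors indexed by $B_1,\dots,B_k$ together with a residual color for the event $\mathbf{y}\notin\bigcup_{i} B_i$, using conditional probabilities $p_{B_1}(\mathbf{x}),\dots,p_{B_k}(\mathbf{x})$ and $1-\sum_i p_{B_i}(\mathbf{x})$ respectively. The multi-color version of Fact 1, obtained by iterating the two-color statement, yields mutually independent Poisson processes in each color class, and hence mutual independence of $|\Pi'\cap B_1|,\dots,|\Pi'\cap B_k|$. Combining this with the count distribution from the previous step establishes that $\Pi'$ is a Poisson process with density $\lambda'(\mathbf{y})$. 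The translation-invariant specialization then drops out by a change of variables $\mathbf{z}=\mathbf{y}-\mathbf{x}$ in the defining integral, giving $\lambda'(\mathbf{y})=\lambda\int_{\mathbb{R}^d}\rho_0(\mathbf{z})\,\mathrm{d}\mathbf{z}=\lambda$ because $\rho_0$ is a probability density. The main obstacle I anticipate is not analytic but bookkeeping: I must be careful with the case in which $\lambda$ is not locally integrable, which requires the standard $\sigma$-finite formulation of a Poisson process and restricting attention to $B$ with $\int_B\lambda'\,\mathrm{d}\mathbf{y}<\infty$. Beyond that the argument is structurally simple, the crux being the recognition that independent displacement is operationally equivalent to an independent target-set-indicator coloring.
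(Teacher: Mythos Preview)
The paper does not prove Fact~2: it is stated as a classical result and cited to Kingman~[14, Chapter~5], so there is no in-paper proof to compare against. Your reduction of displacement to independent coloring is correct and is in fact the standard argument (and essentially the one Kingman gives): color each point of $\Pi$ by the target set into which its displaced image falls, invoke Fact~1 to get independent Poisson counts with the right means, and conclude via Fubini and the change of variables $\mathbf{z}=\mathbf{y}-\mathbf{x}$ for the homogeneous case. The bookkeeping caveat you flag about $\sigma$-finiteness is the right one, and nothing in your outline is missing or would fail.
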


Note that in the independent coloring theorem, the original Poisson
process $\Pi$ does not have to be homogeneous and the coloring
probability $p(\mathbf{x})$ can depend on the location $\mathbf{x}$.
This theorem is more general than the commonly known thinning
theorem for homogeneous Poisson processes. In our subsequent
analysis, we rely on this general version of the thinning theorem to
handle location-dependent coloring/thinning.

Based on Fact 1 and Fact 2, we arrive at the following property.
\begin{property}{\it Distributions of Primary Transmitters and Primary Receivers} \\
Both primary transmitters and receivers form a homogeneous Poisson
process with density $p\lambda$. %
\end{property}
Note that although the two Poisson processes have the same density,
they are not independent. \vspace{-1em}

\subsection{Impact of Transmission Power on Probability of
Opportunity} \label{subsec:Imp_Tx_Pr_H0}

Let $d$ be the distance between $A$ and $B$. Let $\Sc_I(d,r_1,r_2)$
denote the common area of two circles centered at $A$ and $B$ with
radii $r_1$ and $r_2$, respectively, and $\Sc_c(d,r_1,r_2)$ denote
the area within a circle with radius $r_1$ centered at $A$ but
outside the circle with radius $r_2$ centered at $B$ (see
Fig.~\ref{fig:Sc_0}). Then we have the following proposition.
\begin{proposition}{\it Closed-form Expression for Probability of
Opportunity} \\
Under the disk signal propagation and interference model
characterized by $\{r_I,R_I\}$, the probability of opportunity for a
pair of secondary users $A$ and $B$ in a Poisson primary network
with density $\lambda$ and traffic load $p$ is given by {\small
\begin{eqnarray}
\Pr[\mathcal{H}_0] =
\exp\left[-p\lambda\left(\underset{\Sc_c(d,r_I+R_p,R_I)}
{\iint}\frac{\Sc_I(r,R_p,r_I)}{\pi R_p^2}r\mathrm{d}r\mathrm{d}\theta+\pi R_I^2\right)\right], %
\end{eqnarray}}
where the secondary transmitter $A$ is chosen as the origin of the
polar coordinate system for the double integral, and $d$ the
distance between $A$ and $B$.
\end{proposition}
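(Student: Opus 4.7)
The plan is to reduce the event $\mathcal{H}_0$ to a void event of a suitably thinned Poisson process, and then compute the resulting intensity integral. By Facts 1 and 2 together with Property 1, the active primary transmitters form a homogeneous Poisson process on $\mathbb{R}^2$ with intensity $p\lambda$, and conditioned on a transmitter at location $\mathbf{x}$, its receiver is placed uniformly in the disk $B(\mathbf{x},R_p)$, independently across transmitters. Recalling $\mathcal{H}_0=\overline{\mathbb{I}(A,r_I,\textrm{rx})}\cap\overline{\mathbb{I}(B,R_I,\textrm{tx})}$, I would first rephrase $\mathcal{H}_0$ as the event that \emph{every} active primary transmitter $\mathbf{x}$ simultaneously (i) lies outside $B(B,R_I)$, and (ii) has its receiver outside $B(A,r_I)$.

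Next, for each potential transmitter location $\mathbf{x}$, I would color it ``bad'' with probability $\xi(\mathbf{x})$ equal to the probability that (i) or (ii) fails. Splitting into cases: if $\mathbf{x}\in B(B,R_I)$ then $\xi(\mathbf{x})=1$; otherwise $\xi(\mathbf{x})$ equals the probability that a uniform point in $B(\mathbf{x},R_p)$ lands in $B(A,r_I)$, which is exactly $\mathcal{S}_I(\|\mathbf{x}-A\|,R_p,r_I)/(\pi R_p^2)$. This conditional probability is zero unless $\|\mathbf{x}-A\|<r_I+R_p$, and because receiver positions are conditionally independent given the transmitter locations, the colorings are mutually independent. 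Invoking the general inhomogeneous form of Fact 1, the ``bad'' transmitters form a Poisson process with intensity $p\lambda\,\xi(\mathbf{x})$.

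Finally, $\mathcal{H}_0$ holds iff the bad process is empty, so by the void probability of a Poisson process,
\[
\Pr[\mathcal{H}_0]=\exp\!\left(-p\lambda\int_{\mathbb{R}^2}\xi(\mathbf{x})\,\mathrm{d}\mathbf{x}\right).
\]
The integral decomposes into the contribution $\pi R_I^2$ from $\xi\equiv 1$ on $B(B,R_I)$ plus an integral over the complement; on the complement, $\xi$ is supported within $B(A,r_I+R_p)\setminus B(B,R_I)=\mathcal{S}_c(d,r_I+R_p,R_I)$, which after switching to polar coordinates centered at $A$ (so $\mathrm{d}\mathbf{x}=r\,\mathrm{d}r\,\mathrm{d}\theta$ and $\|\mathbf{x}-A\|=r$) yields the double integral in the stated expression. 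The main conceptual obstacle is the correlation between transmitter and receiver locations, since active primary transmitters and primary receivers are not independent despite having equal densities. The key trick is to absorb this correlation entirely into the location-dependent coloring probability $\xi(\mathbf{x})$, at which point the general thinning theorem collapses the joint event into a single void probability, and the remainder is geometric bookkeeping.
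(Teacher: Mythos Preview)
Your proof is correct and rests on the same key idea as the paper: apply location-dependent thinning (Fact~1) to the transmitter Poisson process, with the coloring probability at $\mathbf{x}$ given by the chance that the associated receiver lands in $B(A,r_I)$, and then read off a void probability. The only difference is organizational---the paper first factors $\Pr[\mathcal{H}_0]=\Pr\{\overline{\mathbb{I}(B,R_I,\textrm{tx})}\}\cdot\Pr\{\overline{\mathbb{I}(A,r_I,\textrm{rx})}\mid\overline{\mathbb{I}(B,R_I,\textrm{tx})}\}$ and thins only for the conditional term, whereas you fold both conditions into a single coloring $\xi(\mathbf{x})$; the resulting exponents coincide after the same geometric bookkeeping.
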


\begin{proof}
Based on the definition of spectrum opportunity, we have
\vspace{-0.5em} {\small
\begin{eqnarray}
\Pr[\mathcal{H}_0] &=&
\Pr\{\overline{\mathbb{I}(A,r_I,\textrm{rx})}\cap
\overline{\mathbb{I}(B,R_I,\textrm{tx})}\}, \nn\\ &=&
\Pr\{\overline{\mathbb{I}(A,r_I,\textrm{rx})}~|~\overline{\mathbb{I}(B,R_I,\textrm{tx})}\}\Pr\{\overline{\mathbb{I}(B,R_I,\textrm{tx})}\}.%
\end{eqnarray}}
\vspace{-2em}

Based on Property 1, the second term of (3) is given by
\vspace{-0.5em}{\small
\begin{eqnarray}
\Pr\{\overline{\mathbb{I}(B,R_I,\textrm{tx})}\} = \exp(-p\lambda \pi
R_I^2).
\end{eqnarray}}
Next we obtain the first term
$\Pr\{\overline{\mathbb{I}(A,r_I,\textrm{rx})}~|~\overline{\mathbb{I}(B,R_I,\textrm{tx})}\}$
of (3) based on Fact 1 with location-dependent coloring.

Let $\Pi_{\textrm{tx}}$ denote the Poisson process formed by primary
transmitters. If we color those primary transmitters in
$\Pi_{\textrm{tx}}$ whose receivers are within distance $r_I$ to
$A$, then from Fact 1 we obtain another Poisson process
$\Pi'_{\textrm{tx}}$ formed by all the colored primary transmitters
with density $p\lambda \frac{\Sc_I(r,R_p,r_I)}{\pi R_p^2}$, where
$\frac{\Sc_I(r,R_p,r_I)}{\pi R_p^2}$ is the coloring probability for
a primary transmitter at distance $r$ to $A$.

Given $\overline{\mathbb{I}(B,R_I,\textrm{tx})}$, $\ie$ there are no
primary transmitters within distance $R_I$ to $B$, those primary
receivers within distance $r_I$ to $A$ can only communicate with
those primary transmitters inside $\Sc_c(d,r_I+R_p,R_I)$. Thus, the
event $\overline{\mathbb{I}(A,r_I,\textrm{rx})}$ (conditioned on
$\overline{\mathbb{I}(B,R_I,\textrm{tx})}$) occurs if and only if
$\Pi'_{\textrm{tx}}$ does not have any points inside
$\Sc_c(d,r_I+R_p,R_I)$, $\ie$ {\small
\begin{eqnarray}
\Pr\{\overline{\mathbb{I}(A,r_I,\textrm{rx})}~|~\overline{\mathbb{I}(B,R_I,\textrm{tx})}\}=\exp\left(-\underset{\Sc_c(d,r_I+R_p,R_I)}
{\iint}p\lambda \frac{\Sc_I(r,R_p,r_I)}{\pi
R_p^2}r\mathrm{d}r\mathrm{d}\theta\right).
\end{eqnarray}}
Then by substituting (4, 5) into (3), we arrive at (2).
\end{proof}

While the closed-form expression for $\Pr[\mathcal{H}_0]$ given in
(2) appears to be complex with a double integral, it has a simple
structure that allows us to establish the monotonicity and the
exponential decay rate of $\Pr[\mathcal{H}_0]$ with respect to
$r_I^2$ as given in Theorem 1. Furthermore, as shown in Appendix A,
by integrating with respect to $\theta$ first, we can reduce the
double integral in (2) to a single integral $\int_0^{r_I+R_p}
\frac{\Sc_I(r,R_p,r_I)}{\pi R_p^2}r\theta(r)~\mathrm{d}r$, where
$\theta(r)$ is a function of the radial coordinate $r$ and is
determined by the shape of $\Sc_c(d,r_I+R_p,R_I)$. The basic idea is
that the integrand in (2) is not a function of the angular
coordinate $\theta$ and the range of $\theta$ as a function of the
radial coordinate $r$ can be obtained in an explicit form. In the
integrand of the obtained single integral, $\Sc_I(r,R_p,r_I)$ that
depends on $r$ is also in an explicit form as obtained
in~\citep{Two_circles} and provided in Appendix A. As a consequence,
the resulting single integral is easy to compute.

From (2), we obtain the following theorem that characterizes the
impact of transmission power $p_{tx}$ on the probability of
opportunity.

\begin{theorem} {\it Impact of Transmission Power on Opportunity Occurrence}
\begin{itemize}
\item[T1.1.] $\Pr[\mathcal{H}_0]$ is a strictly decreasing function of
$p_{tx}\propto r_I^{\alpha}$.
\item[T1.2.] $\Pr[\mathcal{H}_0]$ decreases exponentially\footnote{A quantity $N$ is said to decrease
exponentially with respect to $t$ if its decay rate is proportional
to its value. Symbolically, this can be expressed as the following
differential equation: $\frac{\mathrm{d}N}{\mathrm{d}t}=-\lambda N$,
where $\lambda>0$ is called the decay constant.} with
$p_{tx}^{2/\alpha}\propto r_I^2$, where the decay constant is
proportional to $p\lambda$, $\ie$ $\exp(-p\lambda \pi
R_I^2)<\frac{\Pr[\mathcal{H}_0]}{\exp(-p\lambda \pi r_I^2)}\leq 1$,
with equality when $r_I\geq d+R_I+R_p$.
\item[T1.3.] $\Pr[\mathcal{H}_0]$ decreases exponentially with
$p\lambda$, where the decay constant is $\pi r_I^2\propto
p_{tx}^{2/\alpha}$.
\end{itemize}
\end{theorem}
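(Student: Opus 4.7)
The plan is to read all three claims off the closed-form (2). Write
\[
\Pr[\mathcal{H}_0]=\exp(-p\lambda\,K),\qquad K=\pi R_I^2+\underset{\Sc_c(d,r_I+R_p,R_I)}{\iint}\frac{\Sc_I(r,R_p,r_I)}{\pi R_p^2}\,r\,\mathrm{d}r\,\mathrm{d}\theta.
\]
Since $K$ depends on $r_I,R_I,R_p,d$ but not on $p\lambda$, the three parts reduce to: (T1.1) strict monotonicity of $K$ in $r_I$; (T1.2) the two-sided bound $\pi r_I^2\le K\le \pi(r_I^2+R_I^2)$ together with the stated strict/equality conditions; and (T1.3) the trivial fact that $K$ is $p\lambda$-free.

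For T1.1, I would enlarge $r_I$ to $r_I'$ and observe that both the integration domain $\Sc_c(d,r_I+R_p,R_I)$ and the integrand $\Sc_I(r,R_p,r_I)/(\pi R_p^2)$ grow monotonically with $r_I$ (the former strictly, the latter with strict growth on a positive-measure set), so $K$ strictly increases and $\Pr[\mathcal{H}_0]$ strictly decreases. A one-line coupling alternative: enlarging the interference disk of $A$ strictly enlarges the event $\mathbb{I}(A,r_I,\textrm{rx})$, hence strictly shrinks $\mathcal{H}_0$, and since the added annulus has positive area, Poisson-positivity gives strict decrease in probability. Either route translates the result to $p_{tx}\propto r_I^\alpha$.

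The substantive step (and the main obstacle) is T1.2. The upper bound follows from the set inclusion $\mathcal{H}_0\subseteq\overline{\mathbb{I}(A,r_I,\textrm{rx})}$ built into the definition of opportunity; combined with Property~1 (primary receivers are themselves Poisson with density $p\lambda$) this gives $\Pr[\mathcal{H}_0]\le\Pr\{\overline{\mathbb{I}(A,r_I,\textrm{rx})}\}=\exp(-p\lambda\pi r_I^2)$. Equality holds exactly when $\overline{\mathbb{I}(A,r_I,\textrm{rx})}$ already forces $\overline{\mathbb{I}(B,R_I,\textrm{tx})}$, \ie every primary transmitter within $R_I$ of $B$ must place its receiver within $r_I$ of $A$; the worst-case geometry (transmitter at distance $d+R_I$ from $A$, receiver displaced by $R_p$ in the adversarial direction) reveals this to be $r_I\ge d+R_I+R_p$. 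The lower bound is the trickier piece: I would invoke a Campbell-type identity, namely that integrating the coloring probability $\Sc_I(r,R_p,r_I)/(\pi R_p^2)$ over the full disk of radius $r_I+R_p$ around $A$ reproduces the mean number of primary receivers within $r_I$ of $A$, which by Property~1 equals $\pi r_I^2$. Splitting this identity into the part over $\Sc_c$ (which equals $K-\pi R_I^2$) and the part over the overlap $\mathrm{disk}(B,R_I)\cap\mathrm{disk}(A,r_I+R_p)$, and bounding the latter by $\pi R_I^2$ via the trivial pointwise bound $\Sc_I/(\pi R_p^2)\le 1$, yields $K\le \pi r_I^2+\pi R_I^2$, strictly so whenever the two disks overlap (the regime $d<r_I+R_p+R_I$ tacitly assumed in the stated strict inequality). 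Exponentiating gives the claimed lower bound.

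T1.3 then comes for free. Equation (2) is already of the form $\exp(-p\lambda\cdot K)$ with $K$ independent of $p\lambda$, so $\Pr[\mathcal{H}_0]$ decays exponentially in $p\lambda$ by inspection, and the bounds $\pi r_I^2\le K\le \pi(r_I^2+R_I^2)$ from T1.2 identify the decay constant as $\pi r_I^2$ up to an $R_I$-dependent additive term that is $p_{tx}$-independent; both bounds scale as $p_{tx}^{2/\alpha}$, matching the stated result. The one conceptual obstacle worth flagging is the Campbell-identity reduction underlying the lower bound of T1.2; once the equality $\iint_{\mathrm{disk}(A,r_I+R_p)}\Sc_I/(\pi R_p^2)\,r\,\mathrm{d}r\,\mathrm{d}\theta=\pi r_I^2$ is in hand, the remainder of the argument is bookkeeping with elementary inclusions.
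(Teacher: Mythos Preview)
Your handling of T1.1, T1.3, and the upper bound and equality case in T1.2 matches the paper's proof essentially line for line. For the \emph{lower} bound in T1.2 you take a genuinely different route, and the argument as written contains a directional slip.

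The paper obtains the lower bound by a one-line conditioning argument: factor $\Pr[\mathcal{H}_0]=\Pr\{\overline{\mathbb{I}(A,r_I,\textrm{rx})}\mid\overline{\mathbb{I}(B,R_I,\textrm{tx})}\}\cdot\Pr\{\overline{\mathbb{I}(B,R_I,\textrm{tx})}\}$; the second factor equals $\exp(-p\lambda\pi R_I^2)$, and the first strictly exceeds the unconditional probability $\Pr\{\overline{\mathbb{I}(A,r_I,\textrm{rx})}\}=\exp(-p\lambda\pi r_I^2)$ because deleting all transmitters from the disk around $B$ can only make it easier to have no receivers near $A$. No integral manipulation is required.

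Your Campbell-identity route is valid in principle, but the inequality you invoke points the wrong way. In your notation the identity reads $\pi r_I^2=(K-\pi R_I^2)+L$, where $L$ is the integral over the overlap $\mathrm{disk}(B,R_I)\cap\mathrm{disk}(A,r_I+R_p)$; equivalently $K=\pi r_I^2+\pi R_I^2-L$. Bounding $L\le\pi R_I^2$ via the pointwise bound, as you propose, yields $K\ge\pi r_I^2$, which is the \emph{upper} bound on $\Pr[\mathcal{H}_0]$ you already had by set inclusion, not the lower bound. What you actually need is the trivial $L\ge 0$, strict whenever the two disks overlap (which, as you correctly flag, is the tacitly assumed regime $d<r_I+R_p+R_I$); that immediately gives $K<\pi r_I^2+\pi R_I^2$. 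With this correction your Campbell approach goes through and is a legitimate alternative; the paper's conditioning argument simply reaches the same strict inequality without ever touching the integral.
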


\begin{proof}
Theorem 1 is obtained by examining the closed-form expression for
$\Pr[\mathcal{H}_0]$ given in (2). Details are given in Appendix B.
\end{proof}

From T1.2, we can see that when the transmission power of secondary
users is high ($r_I\geq d+R_I+R_p$), the probability of opportunity
$\Pr[\mathcal{H}_0]$ has a simple expression:
$\Pr[\mathcal{H}_0]=\exp(-p\lambda \pi r_I^2)$. When $r_I\geq
d+R_I+R_p$, the absence of primary receivers within distance $r_I$
to $A$ automatically implies the absence of primary transmitters
within distance $R_I$ to $B$. Thus, the opportunity occurs if and
only if there is no primary receiver within distance $r_I$ to $A$,
which leads to the simple expression for $\Pr[\mathcal{H}_0]$.
Moreover, from T1.2 we can see that the traffic load $p\lambda$ of
primary users determines the exponential decay rate of
$\Pr[\mathcal{H}_0]$ with respect to $p_{tx}^{2/\alpha}$. Similarly,
T1.3 shows that the area $\pi r_I^2$ ``consumed'' by the secondary
transmitter, a concept introduced in~\citep{Gupta&Kumar:00IT}, is
the decay constant of $\Pr[\mathcal{H}_0]$ with respect to
$p\lambda$.

A numerical example is given in Fig.~\ref{fig:log_Pr_H0}(a), where
$\Pr[\mathcal{H}_0]$ and its lower and upper bounds ($\exp[-p\lambda
\pi(r_I^2+R_I^2)]$ and $\exp(-p\lambda \pi r_I^2)$, respectively)
are plotted as a function of $r_I$. The exponential decay rate of
$\Pr[\mathcal{H}_0]$ can be easily observed by noticing the log
scale. Fig.~\ref{fig:log_Pr_H0}(b) demonstrates $\Pr[\mathcal{H}_0]$
as a function of $p\lambda$ for different $r_I$. It shows that the
exponential decay constant of $\Pr[\mathcal{H}_0]$ with respect to
$p\lambda$ increases as $r_I$ increases. \vspace{-1em}

\subsection{Impact of Transmission Power on Detection Performance}
\label{subsec:Imp_Tx_Pow_Det} \vspace{-0.5em}

In the following, we focus on the performance of the spectrum
opportunity detector for one pair of secondary users $A$ and $B$,
where there are no other secondary users in the network.

For LBT, false alarms occur if and only if there exist primary
transmitters within the detection range $r_D$ of $A$ under
$\mathcal{H}_0$, and miss detections occur if and only if there is
no primary transmitter within the detection range $r_D$ of $A$ under
$\mathcal{H}_1$. We thus have the following proposition.
\begin{proposition}{\it Closed-form Expressions for $P_F$ and $P_{MD}$} \\
Under the disk model characterized by $\{r_I,R_I\}$, let $r_D$ be
the detection range. The probabilities of false alarm $P_F$ and miss
detection $P_{MD}$ for a pair of secondary users $A$ and $B$ in a
Poisson primary network with density $\lambda$ and traffic load $p$
are given by \vspace{-0.5em} {\small
\begin{eqnarray}
P_F &=& 1-\exp\left[-p\lambda\left(\pi
r_D^2-\Sc_I(d,r_D,R_I)-\underset{\Sc_o}{\iint}\frac{\Sc_I(r,R_p,r_I)}{\pi
R_p^2}r\mathrm{d}r\mathrm{d}\theta\right)\right],
\end{eqnarray}} \vspace{-0.5em}
{\small
\begin{eqnarray}
P_{MD} &=& \frac{\exp(-p\lambda \pi r_D^2)-\exp\left[-p\lambda \left(\pi (r_D^2+R_I^2)-\Sc_I(d,r_D,R_I)+\underset{\Sc_c(d,r_I+R_p,R_I)-\Sc_o}{\iint}%
         \frac{\Sc_I(r,R_p,r_I)}{\pi R_p^2}r\mathrm{d}r\mathrm{d}\theta\right)\right]}{1-\exp\left[-p\lambda\left(%
         \underset{\Sc_c(d,r_I+R_p,R_I)}{\iint}\frac{\Sc_I(r,R_p,r_I)}{\pi R_p^2}r\mathrm{d}r\mathrm{d}\theta+\pi
         R_I^2\right)\right]}, ~~~~~~
\end{eqnarray}}
where the secondary transmitter $A$ is chosen as the origin of the
polar coordinate system for the double integral, $d$ the distance
between $A$ and $B$, and $\Sc_o=\Sc_c(d,r_D,R_I)\cap
\Sc_c(d,r_I+R_p,R_I)$.
\end{proposition}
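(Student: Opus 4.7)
The plan is to derive the expressions by the same combination of Bayes decomposition and location-dependent thinning that was used in the proof of Proposition 1, augmented by keeping track of a third event — the absence of primary transmitters in the detection disk of $A$. Define the three basic events $D=\overline{\mathbb{I}(A,r_D,\textrm{tx})}$, $E_B=\overline{\mathbb{I}(B,R_I,\textrm{tx})}$, and $E_A=\overline{\mathbb{I}(A,r_I,\textrm{rx})}$, so that the LBT decision rule declares $\mathcal{H}_0$ exactly on $D$, and $\mathcal{H}_0=E_A\cap E_B$. Then write $P_F=1-\Pr\{D\mid \mathcal{H}_0\}=1-\Pr\{D\cap E_A\cap E_B\}/\Pr[\mathcal{H}_0]$, and $P_{MD}=\Pr\{D\mid \mathcal{H}_1\}=[\Pr\{D\}-\Pr\{D\cap E_A\cap E_B\}]/(1-\Pr[\mathcal{H}_0])$. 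The denominator $\Pr[\mathcal{H}_0]$ is already supplied by Proposition 1, and $\Pr\{D\}=\exp(-p\lambda\pi r_D^2)$ is immediate from Property 1, so the whole task reduces to one joint probability, $\Pr\{D\cap E_A\cap E_B\}$.

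I would factor this joint probability as $\Pr\{D\cap E_B\}\,\Pr\{E_A\mid D\cap E_B\}$. The first factor is obtained by applying Property 1 to the union of the two disks of radius $r_D$ about $A$ and $R_I$ about $B$: the area of that union is $\pi r_D^2+\pi R_I^2-\Sc_I(d,r_D,R_I)$, giving $\Pr\{D\cap E_B\}=\exp\bigl(-p\lambda[\pi r_D^2+\pi R_I^2-\Sc_I(d,r_D,R_I)]\bigr)$. For the conditional factor, I would argue exactly as in the proof of Proposition 1, invoking Fact 1 with location-dependent coloring probability $\Sc_I(r,R_p,r_I)/(\pi R_p^2)$ on the surviving transmitter process. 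The event $E_A$ occurs iff no colored transmitter falls outside the disks already excluded by $D$ and $E_B$; i.e., the relevant region is the subset of $\Sc_c(d,r_I+R_p,R_I)$ lying outside the disk of radius $r_D$ around $A$, which is precisely $\Sc_c(d,r_I+R_p,R_I)\setminus\Sc_o$ with $\Sc_o=\Sc_c(d,r_D,R_I)\cap\Sc_c(d,r_I+R_p,R_I)$.

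Putting the two factors together yields a clean closed form for $\Pr\{D\cap E_A\cap E_B\}$. Plugging this into $P_F=1-\Pr\{D\cap E_A\cap E_B\}/\Pr[\mathcal{H}_0]$ and cancelling the integral over $\Sc_c(d,r_I+R_p,R_I)$ against the one over $\Sc_c(d,r_I+R_p,R_I)\setminus\Sc_o$ (using additivity of integrals over disjoint regions) leaves only the integral over $\Sc_o$ with a minus sign, together with the $\pi r_D^2$ and $-\Sc_I(d,r_D,R_I)$ terms. This is exactly expression (6). The $P_{MD}$ formula in (7) then follows by direct substitution into $[\exp(-p\lambda\pi r_D^2)-\Pr\{D\cap E_A\cap E_B\}]/(1-\Pr[\mathcal{H}_0])$, using Proposition 1 for the denominator.

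The step I expect to require the most care is the correct identification of the conditional thinning region. One must verify that, given $D\cap E_B$, a primary receiver can fall inside the disk of radius $r_I$ around $A$ only if its parent transmitter lies in $\Sc_c(d,r_I+R_p,R_I)\setminus\Sc_o$: the parent is within $R_p$ of the receiver and hence within $r_I+R_p$ of $A$, it cannot lie within $r_D$ of $A$ (excluded by $D$), and it cannot lie within $R_I$ of $B$ (excluded by $E_B$). The independence structure needed to apply Fact 1 conditionally — that conditioning on the absence of Poisson points in one region leaves the process on the complement unchanged, and that the receiver offsets remain i.i.d. uniform — should be stated explicitly; everything else is bookkeeping of the integration domains and algebraic cancellation.
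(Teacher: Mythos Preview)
Your proposal is correct and follows essentially the same approach as the paper's proof in Appendix~C: the same Bayes decomposition reducing both $P_F$ and $P_{MD}$ to the single joint probability $\Pr\{D\cap E_A\cap E_B\}$, the same factorization $\Pr\{D\cap E_B\}\cdot\Pr\{E_A\mid D\cap E_B\}$, and the same use of Fact~1 with location-dependent coloring over the region $\Sc_c(d,r_I+R_p,R_I)\setminus\Sc_o$. Your identification of the conditional thinning region and the algebraic cancellation leading to~(6) are both handled exactly as in the paper.
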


\begin{proof}
Similar to Proposition 1, the proof uses Fact 1 with
location-dependent coloring. For details, see Appendix C.
\end{proof}

Similarly to (2), the double integral in (6, 7) can be simplified to
a single integral due to the independence of the integrand with
respect to the angular coordinate $\theta$ and the special shape of
$\Sc_o$. Due to the page limit, the details are left
in~\citep{RenZhaoTR0705}.

From Proposition 2, we can show the following theorem that
characterizes the impact of the transmission power of secondary
users (represented by $r_I$) on the asymptotic behavior of the ROC
for spectrum opportunity detection.

\begin{theorem} {\em Impact of Transmission Power on Detection
Performance}\footnote{Since the minimum transmission power for
successful reception is in general higher than the maximum allowable
interference power, it follows that the transmission range $R_p$ of
primary users is smaller than $R_I$. Furthermore, under the disk
signal propagation and interference model, we have $R_p=\beta R_I$
($0<\beta<1$). A similar relationship holds for $d$ and $r_I$.}. \\
There exist two points on ROC that asymptotically approach $(0,1)$
as $\frac{r_I}{R_I}\rightarrow 0 \textrm{ and }\infty$,
respectively. Specifically, \vspace{-1em} {\small
\begin{eqnarray*}
& &\underset{\frac{r_I}{R_I}\rightarrow
0}{\lim}(P_F(r_D=R_I),P_D(r_D=R_I))=(0,1), \\
& &\underset{\frac{r_I}{R_I}\rightarrow
\infty}{\lim}(P_F(r_D=r_I-R_I),P_D(r_D=r_I-R_I))=(0,1).
\end{eqnarray*}}
\end{theorem}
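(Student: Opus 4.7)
My plan is to handle the two limits separately by specializing the closed-form expressions in Proposition~2 and sending $r_I/R_I$ to the appropriate extreme. The footnote gives $R_p = \beta R_I$ and $d = \gamma r_I$ with $0 < \beta,\gamma < 1$, which I will use repeatedly; in particular $r_I \to 0$ forces $d \to 0$, while $r_I \to \infty$ forces $r_D = r_I - R_I \to \infty$.

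\textbf{Limit $r_I/R_I \to 0$ with $r_D = R_I$.} I would let $r_I \to 0$ (hence $d \to 0$) directly in (6) and (7). Two geometric facts make everything collapse: the two $R_I$-disks centered at $A$ and $B$ become concentric, so $\Sc_I(d,R_I,R_I) \to \pi R_I^2$; and the $r_I$-disk shrinks to a point, so $\Sc_I(r,R_p,r_I) \to 0$ uniformly in $r$ and every double integral in (6)--(7) vanishes. Substituting into (6) gives $\pi r_D^2 - \Sc_I(d,r_D,R_I) \to 0$, whence $P_F \to 1 - \exp(0) = 0$. In (7), both exponentials in the numerator tend to $\exp(-p\lambda\pi R_I^2)$, so the numerator vanishes; the denominator tends to $1 - \exp(-p\lambda\pi R_I^2) > 0$, giving $P_{MD} \to 0$.

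\textbf{Limit $r_I/R_I \to \infty$ with $r_D = r_I - R_I$.} The core is a deterministic geometric lemma: since $R_p < R_I$, any primary transmitter within $r_D$ of $A$ has its receiver within $r_D + R_p < r_I$ of $A$, so $\overline{\mathbb{I}(A,r_I,\textrm{rx})}$ forbids the existence of any such transmitter. This shows directly that $\mathcal{H}_0$ precludes false alarms, so $P_F = 0$ exactly. I would verify this in (6) by noting that on $\Sc_o = \Sc_c(d,r_D,R_I)$ the inequality $r \leq r_D \leq r_I - R_p$ forces $\Sc_I(r,R_p,r_I) = \pi R_p^2$; the integrand becomes identically one and the double integral collapses to the area $\pi r_D^2 - \Sc_I(d,r_D,R_I)$, so the exponent in (6) vanishes.

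The same lemma also implies that under Property~1 the event $\mathcal{H}_0$ is contained in the event that no primary transmitter lies within $r_D$ of $A$, whence
\begin{equation*}
P_{MD} = \frac{\exp(-p\lambda\pi r_D^2) - \Pr[\mathcal{H}_0]}{1 - \Pr[\mathcal{H}_0]}.
\end{equation*}
As $r_I \to \infty$, $r_D \to \infty$ so $\exp(-p\lambda\pi r_D^2) \to 0$; meanwhile Theorem~1 (T1.2) yields $\Pr[\mathcal{H}_0] \leq \exp(-p\lambda\pi r_I^2) \to 0$. The numerator tends to $0$ and the denominator to $1$, so $P_{MD} \to 0$. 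I expect the main obstacle to be isolating the geometric lemma in the second limit and matching it to the specific terms in (6); once that is in hand, both cases reduce to elementary exponential asymptotics.
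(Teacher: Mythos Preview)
Your argument is correct. For the first limit it coincides with the paper's: both of you substitute $r_D=R_I$ into (6)--(7), use $d\to 0$ so that $\Sc_I(d,R_I,R_I)\to\pi R_I^2$, and kill the double integrals (the paper emphasizes that the integration domain $\Sc_c(d,r_I+R_p,R_I)$ collapses, you emphasize that the integrand $\Sc_I(r,R_p,r_I)\le\pi r_I^2$ vanishes uniformly; either suffices since the domain stays bounded).

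For the second limit your route differs from the paper's in a useful way. The paper simply evaluates the two double integrals asymptotically, obtaining $\iint_{\Sc_c(d,r_I+R_p,R_I)}\to\pi(r_I^2-R_I^2)$ and $\iint_{\Sc_o}\to\pi[(r_I-R_I)^2-R_I^2]$, then substitutes into (6) and (7) and checks that the exponents cancel or diverge as needed. Your geometric lemma---that $r_D+R_p<r_I$ forces any transmitter within $r_D$ of $A$ to have its receiver within $r_I$ of $A$, hence $\mathcal{H}_0\subseteq\overline{\mathbb{I}(A,r_D,\textrm{tx})}$---is a cleaner explanation of the same cancellation: it shows $P_F=0$ \emph{exactly} for every $r_I>R_I$, not merely in the limit, and it immediately yields the identity $P_{MD}=\bigl(\exp(-p\lambda\pi r_D^2)-\Pr[\mathcal{H}_0]\bigr)\big/\bigl(1-\Pr[\mathcal{H}_0]\bigr)$ without any further integral bookkeeping. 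The paper's direct computation confirms the same numbers but obscures why the cancellation occurs; your containment argument makes the mechanism transparent and also connects back to the heuristic in Sec.~\ref{subsec:Det_Oppor}.
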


\begin{proof}
The intuitive reasons for choosing $r_D=R_I$ and $r_D=r_I-R_I$ in
the two extreme regimes are discussed in
Sec.~\ref{subsec:Det_Oppor}. For details of the proof, see Appendix
D.
\end{proof}

Since $(0,1)$ is the perfect operating point on a ROC, we can
asymptotically approach perfect detection performance by choosing
$r_D=R_I$ when $\frac{r_I}{R_I}\rightarrow 0$ or $r_D=r_I-R_I$ when
$\frac{r_I}{R_I}\rightarrow \infty$. A numerical example is shown in
Fig.~\ref{fig:ROC_extreme_LBT}, where ROC approaches the corner
$(0,1)$ as $\frac{r_I}{R_I}$ increases or decreases. \vspace{-1em}

\section{Power Control for Optimal Transport Throughput}
\label{sec:power_trans_throughput} \vspace{-0.5em}

In this section, the impact of the transmission power on the
occurrence of opportunities and the impact on opportunity detection
are integrated together for optimal power control. Under the
performance measure of transport throughput subject to an
interference constraint, we examine how a secondary user should
choose its transmission power according to the interference
constraint, the traffic load and transmission power of primary
users, and its own application type (guaranteed delivery vs.
best-effort delivery). \vspace{-1em}

\subsection{Transport Throughput}
\vspace{-0.5em}

From Sec.~\ref{subsec:Imp_Tx_Pr_H0} and
Sec.~\ref{subsec:Imp_Tx_Pow_Det}, it seems that the transmission
power $p_{tx}$ should be chosen as small as possible to maximize the
probability of opportunity and improve detection quality. Such a
choice of the transmission power, however, does not lead to an
efficient communication system due to the small distance covered by
the transmission. We adopt here transport throughput as the
performance measure, which is defined as \vspace{-0.5em} %
{\small
\begin{eqnarray}
C(r_I,r_D) = d(r_I)\cdot P_S(r_D,d(r_I)),
\end{eqnarray}}
where $d(\propto r_I)$ is the transmission range of the secondary
user, and $P_S$ is the probability of successful data transmission
which depends on both the occurrence of opportunities and the
reliability of opportunity detection. Then power control for optimal
transport throughput can be formulated as a constrained optimization
problem: \vspace{-0.5em} {\small
\begin{eqnarray}
r^*_I = \underset{r_I}{\arg\max}\{C\} =
\underset{r_I}{\arg\max}\{d(r_I)\cdot P_S(r_D,d(r_I))\} \textrm{
s.t. } P_C(r_D,r_I)\leq \zeta,
\end{eqnarray}}
where $\zeta$ is maximum allowable collision probability, $P_C$ the
probability of colliding with primary users which depends on the
reliability of opportunity detection. Note that the detection range
$r_D$ is not an independent parameter; it is determined by
maximizing $P_S(r_D,d(r_I))$ subject to $P_C(r_D,r_I)\leq \zeta$ for
every given interference range $r_I$.

In order to solve the above constrained optimization problem, we
need expressions for $P_C$ and $P_S$ which collectively measure the
MAC layer performance. \vspace{-1em}

\subsection{MAC Performance of LBT}
\label{subsec:MAC_Per_LBT} \vspace{-0.5em}

We first consider $P_S$, which is application dependent. For
applications requiring guaranteed delivery, an acknowledgement (ACK)
signal from the secondary receiver $B$ to the secondary transmitter
$A$ is required to complete a data transmission. Specifically, in a
successful data transmission, the following three events should
occur in sequence: $A$ detects the opportunity
($\overline{\mathbb{I}(A,r_D,\textrm{tx})}$) and transmits data to
$B$; $B$ receives data successfully
($\overline{\mathbb{I}(B,R_I,\textrm{tx})}$) and replies to $A$ with
an ACK; $A$ receives the ACK
($\overline{\mathbb{I}(A,R_I,\textrm{tx})}\}$) which completes the
transmission. We thus have \vspace{-0.5em} {\small
\begin{eqnarray}
P_S &=& \Pr\{\overline{\mathbb{I}(A,r_D,\textrm{tx})}\cap
\overline{\mathbb{I}(B,R_I,\textrm{tx})}\cap
\overline{\mathbb{I}(A,R_I,\textrm{tx})}\}, \nn\\ &=&
\Pr\{\overline{\mathbb{I}(A,r_E,\textrm{tx})}\cap
\overline{\mathbb{I}(B,R_I,\textrm{tx})}\},
\end{eqnarray}}
where $r_E=\max\{r_D,R_I\}$.

For best-effort delivery applications~\citep{Best_effort},
acknowledgements are not required to confirm the completion of data
transmissions. In this case, we have \vspace{-0.5em} {\small
\begin{eqnarray}
P_S = \Pr\{\overline{\mathbb{I}(A,r_D,\textrm{tx})}\cap
\overline{\mathbb{I}(B,R_I,\textrm{tx})}\}.
\end{eqnarray}}
\vspace{-2em}

The probability of collision is defined as \footnote{In obtaining
the definition (12) of $P_C$, we have assumed that the interference
caused by the ACK signal is negligible due to its short duration.}
\vspace{-0.5em} {\small
\begin{eqnarray}
P_C = \Pr\{\textrm{$A$ transmits
data}~|~\mathbb{I}(A,r_I,\textrm{rx})\}.
\end{eqnarray}}
Note that $P_C$ is conditioned on $\mathbb{I}(A,r_I,\textrm{rx})$
instead of $\mathcal{H}_1$. Clearly,
$\Pr[\mathbb{I}(A,r_I,\textrm{rx})]\le \Pr[\mathcal{H}_1]$.

Since the secondary transmitter $A$ transmits data if and only if
$A$ detects no nearby primary transmitters, we thus have
\vspace{-0.5em} {\small
\begin{eqnarray}
P_C =
\Pr\{\overline{\mathbb{I}(A,r_D,\textrm{tx})}~|~\mathbb{I}(A,r_I,\textrm{rx})\}.
\end{eqnarray}}
\vspace{-2em}

By considering the Poisson primary network and the disk model, we
obtain the closed-form expressions for $P_C$ and $P_S$ given in the
following proposition.
\begin{proposition}{\it Closed-form Expressions for $P_C$ and $P_S$} \\
Under the disk model characterized by $\{r_I,R_I\}$, let $r_D$ be
the detection range. The probabilities of collision $P_C$ and
successful transmission $P_S$ for a pair of secondary users $A$ and
$B$ in a Poisson primary network with density $\lambda$ and traffic
load $p$ are given by {\small
\begin{eqnarray}
P_C &=& \frac{\exp(-p\lambda \pi r_D^2)[1-\exp(-p\lambda
\pi(r_I^2-I(r_D,r_I,R_p)))]}{1-\exp(-p\lambda \pi r_I^2)}, \\
P_S &=& \left\{\begin{array}{ll}
\exp[-p\lambda(\pi(r_E^2+R_I^2)-\Sc_I(d,r_E,R_I))], & \textrm{for guaranteed delivery,} \\
\exp[-p\lambda(\pi(r_D^2+R_I^2)-\Sc_I(d,r_D,R_I))], & \textrm{for best-effort delivery,} %
\end{array}
\right.
\end{eqnarray}}
where $I(r_D,r_I,R_p)=\int_{0}^{r_D} 2r\frac{\Sc_I(r,r_I,R_p)}{\pi
R_p^2}\mathrm{d}r$.
\end{proposition}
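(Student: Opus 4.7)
The plan is to derive $P_S$ by a direct void-probability calculation using Property 1, and to derive $P_C$ via Bayes' rule combined with the location-dependent coloring technique already used in Propositions 1 and 2. For $P_S$, Property 1 says primary transmitters form a homogeneous Poisson process of density $p\lambda$. Under guaranteed delivery, the event $\overline{\mathbb{I}(A,r_E,\textrm{tx})}\cap \overline{\mathbb{I}(B,R_I,\textrm{tx})}$ is precisely the absence of such points in the union of the disks $B(A,r_E)$ and $B(B,R_I)$, whose area by inclusion--exclusion is $\pi r_E^2+\pi R_I^2-\Sc_I(d,r_E,R_I)$. The Poisson void probability then yields the claimed guaranteed-delivery expression, and the best-effort case is identical after replacing $r_E$ by $r_D$.

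For $P_C$, I would start from Bayes' rule and partition $\overline{\mathbb{I}(A,r_D,\textrm{tx})}$ by the dichotomy $\mathbb{I}(A,r_I,\textrm{rx}) / \overline{\mathbb{I}(A,r_I,\textrm{rx})}$ to get
\[
P_C = \frac{\exp(-p\lambda\pi r_D^2) - \Pr\{\overline{\mathbb{I}(A,r_D,\textrm{tx})}\cap \overline{\mathbb{I}(A,r_I,\textrm{rx})}\}}{1-\exp(-p\lambda\pi r_I^2)},
\]
where the denominator and the first numerator term follow directly from Property 1. The technical core is the joint void probability. Following the Proposition 1 recipe, I would color any primary transmitter at distance $r$ from $A$ ``red'' with probability $\Sc_I(r,r_I,R_p)/(\pi R_p^2)$, so that by Fact 1 the red points form a location-dependent Poisson process and $\overline{\mathbb{I}(A,r_I,\textrm{rx})}$ is equivalent to having no red point anywhere. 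Partitioning the plane at radius $r_D$, the joint event becomes (no primary transmitters inside $B(A,r_D)$) AND (no red transmitters outside $B(A,r_D)$); these are independent because they involve disjoint regions of the underlying Poisson process. The first factor contributes $\exp(-p\lambda\pi r_D^2)$. For the second, I would invoke the identity $\int_0^{r_I+R_p}2r\,\Sc_I(r,r_I,R_p)/(\pi R_p^2)\,dr = r_I^2$, obtained by equating two expressions for the expected number of primary receivers inside $B(A,r_I)$ (directly $p\lambda\pi r_I^2$ by Property 1, versus the integrated coloring density). This collapses the outer-region integral to $r_I^2 - I(r_D,r_I,R_p)$, so the second factor is $\exp[-p\lambda\pi(r_I^2-I(r_D,r_I,R_p))]$. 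Multiplying the two factors and plugging back into the Bayes expression reproduces the claimed formula.

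The main obstacle is that $\overline{\mathbb{I}(A,r_I,\textrm{rx})}$ constrains primary transmitters at arbitrarily large radii from $A$ (any transmitter whose receiver might land inside $B(A,r_I)$), so it cannot be combined with $\overline{\mathbb{I}(A,r_D,\textrm{tx})}$ by a naive product of void probabilities over disks. The partition-at-$r_D$ coloring argument, together with the $r_I^2$ identity that lets the outer-region integral telescope against the trivial Property 1 count, is precisely what produces the clean closed form.
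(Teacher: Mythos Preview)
Your proposal is correct and follows essentially the same route as the paper: $P_S$ comes directly from the Poisson void probability (Property~1) on the union of two disks, and $P_C$ from Bayes' rule together with the location-dependent coloring of Fact~1 over the region outside $B(A,r_D)$, exactly as in Appendix~E. The one pleasant addition is your probabilistic derivation of the identity $\int_0^{r_I+R_p} 2r\,\Sc_I(r,r_I,R_p)/(\pi R_p^2)\,\mathrm{d}r = r_I^2$ by double-counting expected receivers in $B(A,r_I)$; the paper simply records this as a closed-form evaluation (Appendix~A, Case~3).
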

\begin{proof}
Similar to Proposition 2, the derivation of $P_C$ uses Fact 1 with
location-dependent coloring. For details, see Appendix E. The
expressions for $P_S$ follow immediately from (10, 11) and Property
1.
\end{proof}

Based on the expression for $\Sc_I(r,r_I,R_p)$, we can obtain
$I(r_D,r_I,R_p)$ in an explicit form without integral. Details are
left in~\citep{RenZhaoTR0705} due to the page limit. Notice that the
above expressions for $P_C$ and $P_S$ are in explicit form without
integrals. With the explicit expressions for $P_C$ (14) and $P_S$
(15), the constrained optimization given in (9) can be solved
numerically. \vspace{-1em}

\subsection{Numerical Examples}
\vspace{-0.5em}

Shown in Fig.~\ref{fig:trans_throughput_p_01_LBT} is a numerical
example where we plot transport throughput $C$ as a function of
$r_I$. Notice that $r_I^*$ is the interference range for optimal
transport throughput. We can see that $r_I^*$ for best-effort
applications is different from that for guaranteed delivery, and
neither of them is not in the two extreme regimes. We can also see
that the optimal transport throughput for best-effort delivery is
larger than that for guaranteed delivery. This example thus
demonstrates that OSA based on cognitive radio is more suitable for
best-effort applications as compared to guaranteed delivery due to
the asymmetry of spectrum opportunities.

Fig.~\ref{fig:opt_r_I_LBT} shows how the optimal transmission power
of the secondary user changes with the traffic load and transmission
power of the primary users, as well as the application type of the
secondary user. Specifically, the optimal interference range $r_I^*$
decreases as the traffic load increases. This agrees with our
intuition from the analogy of crossing a multi-lane highway.
Furthermore, the optimal transmission power of the secondary user is
related to that of the primary user. We can see from Fig 8 that when
the traffic load is low, $r^*_I$ is close to the interference range
$R_I$ of primary transmitters. When the traffic load is high,
$r^*_I$ is much smaller than $R_I$.

When the traffic load is low, $r^*_I$ for both application types are
the same. This is because for both application types, the optimal
detection range $r_D$ corresponding to each $r_I$ in the
neighborhood of $r^*_I$ is larger than $R_I$, which leads to the
same probability of successful transmission $P_S$ (see (15)) and
thus the same maximum point $r^*_I$ for transport throughput $C$.
When the traffic load is high, $r^*_I$ for guaranteed delivery is
smaller than that for best-effort delivery, which is consistent with
the case shown in Fig.~\ref{fig:trans_throughput_p_01_LBT}. This is
because for guaranteed delivery, the optimal detection range $r_D$
corresponding to each $r_I$ in the neighborhood of $r^*_I$ is
smaller than $R_I$, which leads to a smaller $P_S$ than that for
best-effort delivery (see (15)) and thus a smaller maximum point
$r^*_I$. \vspace{-1em}

\section{RTS/CTS-Enhanced LBT}
\label{sec:RTS_CTS_LBT} \vspace{-0.5em}

The sources of the detection error floor of LBT in the absence of
noise and fading resemble the hidden and exposed terminal problem in
the conventional ad hoc networks of peer users. It is thus natural
to consider the use of RTS/CTS handshake signaling to enhance the
detection performance of LBT. For RTS/CTS-enhanced LBT (ELBT),
spectrum opportunity detection is performed jointly by the secondary
transmitter $A$ and the secondary receiver $B$ through the exchange
of RTS/CTS signals. The detailed steps are given below.
\begin{itemize}
\item $A$ detects primary transmitters within distance $r_D$.
If it detects none, $A$ sends $B$ a Ready-to-Send (RTS) signal.
\item If $B$ receives the RTS signal from $A$ successfully,
then $B$ replies with a Clear-to-Send (CTS) signal.
\item Upon receiving the CTS signal, $A$
transmits data to $B$.
\end{itemize}

Since for ELBT, the observation space comprises the RTS and CTS
signals, we have the following for the probabilities of false alarm
$P_F$ and miss detection $P_{MD}$. \vspace{-0.5em} {\small
\begin{eqnarray}
P_F
&=& \Pr\{\textrm{failed RTS/CTS exchange }|~\mathcal{H}_0\}, \nonumber\\
&=& \Pr\{\mathbb{I}(A,r_D,\textrm{tx})\cup \mathbb{I}(B,R_I,\textrm{tx})\cup \mathbb{I}(A,R_I,\textrm{tx})~|~\mathcal{H}_0\}, \nonumber\\
&=& \Pr\{\mathbb{I}(A,r_E,\textrm{tx})~|~\mathcal{H}_0\},
\end{eqnarray}}
where the last step follows from
$\Pr\{\mathbb{I}(B,R_I,\textrm{tx})\cap \mathcal{H}_0\}=0$.
\vspace{-0.5em} {\small
\begin{eqnarray}
P_{MD}
&=& \Pr\{\textrm{successful RTS/CTS exchange }|~\mathcal{H}_1\}, \nonumber\\
&=& \Pr\{\overline{\mathbb{I}(A,r_D,\textrm{tx})}\cap \overline{\mathbb{I}(B,R_I,\textrm{tx})}\cap \overline{\mathbb{I}(A,R_I,\textrm{tx})}~|~\mathcal{H}_1\}, \nonumber\\
&=& \Pr\{\overline{\mathbb{I}(A,r_E,\textrm{tx})}\cap
\overline{\mathbb{I}(B,R_I,\textrm{tx})}~|~\mathcal{H}_1\}.
\end{eqnarray}} \vspace{-2em}

Since $A$ transmits data if and only if an successful RTS/CTS
exchange occurs, it follows that\footnote{In obtaining the
definition (18) of $P_C$, we have assumed that the interference
caused by the RTS, CTS and ACK signals is negligible due to their
short durations.} \vspace{-2em} {\small
\begin{eqnarray}
P_C &=& \Pr\{\overline{\mathbb{I}(A,r_E,\textrm{tx})}\cap
\overline{\mathbb{I}(B,R_I,\textrm{tx})}~|~\mathbb{I}(A,r_I,\textrm{rx})\}.
\end{eqnarray}}
\vspace{-2em}

Unlike LBT, miss detections always lead to successful data
transmissions for ELBT. This is because miss detections can only
occur after a successful RTS-CTS exchange. Then $B$ can receive data
successfully as it can receive RTS. We thus have \vspace{-0.5em}
{\small
\begin{eqnarray}
P_S &=& (1-P_F)\cdot \Pr[\mathcal{H}_0]+P_{MD}\cdot
\Pr[\mathcal{H}_1], \nn\\ &=&
\Pr\{\overline{\mathbb{I}(A,r_E,\textrm{tx})}\cap
\overline{\mathbb{I}(B,R_I,\textrm{tx})}\}.
\end{eqnarray}}
Notice that $P_S$ of ELBT is identical to that of LBT for guaranteed
delivery in (10). Due to the requirement on the successful reception
of CTS in opportunity detection, $P_S$ for ELBT is independent of
the application, $\ie$ whether or not ACK is required.

Based on (16-19), we obtain the following proposition for the
Poisson primary network model. \vspace{-1.5em}
\begin{proposition} {\it Closed-form Expressions for $P_F$, $P_{MD}$, $P_C$, and
$P_S$} \\
Under the disk model characterized by $\{r_I,R_I\}$, let $r_D$ be
the detection range. The probabilities of false alarm $P_F$, miss
detection $P_{MD}$, collision $P_C$ and successful transmission
$P_S$ for a pair of secondary users $A$ and $B$ in a Poisson primary
network with density $\lambda$ and traffic load $p$ are given by
\vspace{-0.5em} {\small
\begin{eqnarray}
P_F = 1-\exp\left[-p\lambda\left(\pi
r_E^2-\Sc_I(d,r_E,R_I)-\underset{\Sc'_o}{\iint}\frac{\Sc_I(r,R_p,r_I)}{\pi
R_p^2}r\mathrm{d}r\mathrm{d}\theta\right)\right],
\end{eqnarray}}
where $\Sc'_o=\Sc_c(d,r_E,R_I)\cap \Sc_c(d,r_I+R_p,R_I)$.
\vspace{-0.5em} {\small
\begin{eqnarray}
P_{MD}
=\frac{\exp[-p\lambda(\pi(r_E^2+R_I^2)-\Sc_I(d,r_E,R_I))]\left[1-\exp\left(-p\lambda\underset{\Sc_c(d,r_I+R_p,R_I)-\Sc'_o}{\iint}\frac{\Sc_I(r,R_p,r_I)}{\pi R_p^2}r\mathrm{d}r\mathrm{d}\theta\right)\right]}%
{1-\exp\left[-p\lambda\left(%
\underset{\Sc_c(d,r_I+R_p,R_I)}{\iint}\frac{\Sc_I(r,R_p,r_I)}{\pi R_p^2}r\mathrm{d}r\mathrm{d}\theta+\pi R_I^2\right)\right]}. %
\end{eqnarray}}
\vspace{-0.5em} {\small
\begin{eqnarray}
P_C
=\frac{\exp[-p\lambda(\pi(r_E^2+R_I^2)-\Sc_I(d,r_E,R_I))]\left[1-\exp\left(-p\lambda\underset{\Sc_c(d,r_I+R_p,R_I)-\Sc'_o}{\iint}\frac{\Sc_I(r,R_p,r_I)}{\pi R_p^2}r\mathrm{d}r\mathrm{d}\theta\right)\right]}%
                {1-\exp\left(p\lambda \pi r_I^2\right)}.
\end{eqnarray}
\vspace{-1em}
\begin{eqnarray}
P_S = \exp[-p\lambda(\pi(r_E^2+R_I^2)-\Sc_I(d,r_E,R_I))].
\end{eqnarray}}
Furthermore, Theorem 2 still holds for ELBT, $\ie$ perfect detection
performance can be achieved at $r_D=R_I$ when
$\frac{r_I}{R_I}\rightarrow 0$ and at $r_D=r_I-R_I$ when
$\frac{r_I}{R_I}\rightarrow \infty$.
\end{proposition}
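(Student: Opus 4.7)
The plan is to apply Property~1 and Fact~1 (independent location-dependent thinning) to the event-level identities (16)--(19) in exactly the same pattern used for Propositions~1--3. The unifying structural observation is that every ELBT probability has the same form as its LBT counterpart, but with the detection-disk radius $r_D$ replaced on the transmitter side by $r_E=\max\{r_D,R_I\}$, since a successful RTS/CTS exchange additionally forbids primary transmitters from the $R_I$-disks at both $A$ and $B$. Consequently, (20)--(23) will fall out of reruns of the Proposition~2--3 calculations under this substitution; the only genuine new ingredient is the different conditioning event appearing in $P_C$.

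I would first dispatch $P_S$ and $P_F$. Formula (23) is immediate from (19) and Property~1: primary transmitters form a Poisson process of density $p\lambda$, and the area of the union of the $r_E$-disk at $A$ and the $R_I$-disk at $B$ is $\pi(r_E^2+R_I^2)-\Sc_I(d,r_E,R_I)$ by inclusion--exclusion. For $P_F$ in (20), I would use Bayes, $1-P_F=\Pr\{\overline{\mathbb{I}(A,r_E,\mathrm{tx})}\cap\mathcal{H}_0\}/\Pr[\mathcal{H}_0]$; the numerator splits into a ``no primary transmitter in the two-disk union'' factor from Property~1 times a Fact~1 thinning factor with colouring probability $\Sc_I(r,R_p,r_I)/(\pi R_p^2)$ integrated over $\Sc_c(d,r_I+R_p,R_I)\setminus\Sc'_o$ (the region of surviving transmitters that could still place a receiver inside the $r_I$-disk at $A$), and dividing by $\Pr[\mathcal{H}_0]$ from (2) telescopes the exponent into the form displayed in (20). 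For $P_{MD}$ in (21), I would exploit that $\overline{\mathbb{I}(B,R_I,\mathrm{tx})}$ already kills the $\mathbb{I}(B,R_I,\mathrm{tx})$ branch of $\mathcal{H}_1$, and write
\[
\Pr\{E\cap\mathbb{I}(A,r_I,\mathrm{rx})\}=\Pr\{E\}-\Pr\{E\cap\overline{\mathbb{I}(A,r_I,\mathrm{rx})}\},
\]
with $E=\overline{\mathbb{I}(A,r_E,\mathrm{tx})}\cap\overline{\mathbb{I}(B,R_I,\mathrm{tx})}$; the first term is the $P_S$ of (23) and the second is the thinning factor already obtained for $P_F$, producing the bracketed factor in (21). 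The denominator is $\Pr[\mathcal{H}_1]=1-\Pr[\mathcal{H}_0]$ from Proposition~1. For $P_C$ in (22) the numerator is identical, while the denominator becomes $\Pr\{\mathbb{I}(A,r_I,\mathrm{rx})\}=1-\exp(-p\lambda\pi r_I^2)$, read directly from Property~1 applied to the primary-receiver process.

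The non-routine part, and where I expect the main obstacle, is verifying that Theorem~2 survives the substitution. The enabling observation is that at the two operating points $r_D=R_I$ (as $r_I/R_I\to 0$) and $r_D=r_I-R_I$ (as $r_I/R_I\to\infty$, where eventually $r_I-R_I>R_I$) one has $r_E=r_D$, so the ELBT $P_F$ in (20) is identical in form to the LBT $P_F$ in (6) and the Appendix~D argument delivers $P_F\to 0$ verbatim. Proving $P_{MD}\to 0$ requires more care: in the small-$r_I/R_I$ limit both the colouring integrand $\Sc_I(r,R_p,r_I)/(\pi R_p^2)$ and the domain $\Sc_c(d,r_I+R_p,R_I)\setminus\Sc'_o$ collapse to $0$ with $r_I$, so the bracketed factor in (21) vanishes while the denominator tends to the positive constant $1-\exp(-p\lambda\pi R_I^2)$; in the large-$r_I/R_I$ limit, $\Sc_c(d,r_I+R_p,R_I)\setminus\Sc'_o$ becomes an annulus of fixed width $R_I+R_p$ at radius $\sim r_I$ whose weighted measure grows only linearly in $r_I$, but the prefactor $P_S$ decays like $\exp(-p\lambda\pi r_I^2)$ which dominates, while $\Pr[\mathcal{H}_1]\to 1$. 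The fiddly step is the geometric estimate on $\Sc'_o$ at large $r_I$, which must be tight enough to show that the numerator of (21) is actually $o(\Pr[\mathcal{H}_1])$ rather than merely bounded; once the exponential decay of $P_S$ is in hand, however, the conclusion follows.
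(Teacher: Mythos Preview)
Your proposal is correct and follows essentially the same approach as the paper, which simply states that the derivations are ``very similar to those for LBT'' with the obvious substitution $r_D\mapsto r_E$ arising from the extra RTS/CTS conditioning; you have fleshed out exactly this program. One minor simplification: in the large-$r_I/R_I$ regime for $P_{MD}$ you do not actually need any geometric estimate on $\Sc'_o$, since the bracketed factor in (21) is trivially bounded by $1$, $P_S$ decays like $\exp(-p\lambda\,c\,r_I^2)$, and $\Pr[\mathcal{H}_1]\to 1$, so $P_{MD}\to 0$ immediately.
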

\begin{proof}
The derivations of the above expressions and the proof of Theorem 2
are very similar to those for LBT, and they can be found
in~\citep{RenZhaoTR0705}.
\end{proof}

Similarly, based on (22, 23), we can obtain numerical solutions to
the constrained optimization problem (9) for ELBT.
Fig.~\ref{fig:trans_throughput_vs_p} shows the maximal transport
throughput as a function of the traffic load $p$ obtained by
optimizing $r_I$. We observe from
Fig.~\ref{fig:trans_throughput_vs_p} that RTS/CTS handshake
signaling improves the performance of LBT when the traffic load is
low, but it degrades the performance of LBT with best-effort
delivery when the traffic load is high. This suggests that even when
the overhead associated with RTS/CTS is ignored, RTS/CTS may lead to
performance degradation due to the asymmetry of spectrum
opportunities. When the traffic load is high, LBT with best-effort
delivery gives the best transport throughput. This is consistent
with our previous observation obtained from
Fig.~\ref{fig:trans_throughput_p_01_LBT} that best-effort is a more
suitable application to be considered for overlaying with a primary
network with relatively high traffic load.

It can also be observed from Fig.~\ref{fig:trans_throughput_vs_p}
that the maximal transport throughput of the ``genie aided'' system
(in which all opportunities are correctly detected) provides an
upper bound for LBT and ELBT, which matches our intuition very well.
Since LBT allows the secondary user to access the channel within the
interference constraint even when it is not an opportunity, compared
with the ``genie aided'' system, the increase of transport
throughput of LBT brought by accessing the \emph{busy} channel can
partially compensate the loss of transport throughput due to missed
detection of opportunities. When the traffic load of the primary
network is high, spectrum opportunities occur infrequently, and the
compensated transport throughput of LBT is comparable to the
transport throughput of the ``genie aided'' system. Thus, we see in
Fig.~\ref{fig:trans_throughput_vs_p} that the maximal transport
throughput of LBT for best-effort delivery approaches that of the
``genie aided'' system in this case. \vspace{-1em}

\section{Conclusion and Discussion}
\label{sec:conclu} \vspace{-0.5em}

We have studied transmission power control of secondary users in CR
networks. By carefully examining the concepts of spectrum
opportunity and interference constraint, we have revealed and
analytically characterized the impact of transmission power on the
occurrence of spectrum opportunities and the reliability of
opportunity detection. Based on a Poisson model of the primary
network, we have quantified these impacts by showing the exponential
decay rate of the probability of opportunity with respect to the
transmission power and the asymptotic behavior of the ROC for
opportunity detection. In the analysis, the independent coloring
theorem and displacement theorem have played a significant role,
especially the former one. Such analytical characterizations allow
us to design the transmission power for optimal transport throughput
under constraints on the interference to primary users.

Furthermore, the non-equivalence between detecting primary signals
and detecting spectrum opportunities has been illuminated. It has
been specified that besides noise and fading, the geographical
distribution and traffic pattern of primary users have significant
impact on the performance of physical layer spectrum sensing. The
complex dependency of the relationship between PHY and MAC on the
application types and the use of MAC handshake signaling such as
RTS/CTS is also illustrated.

In the above analysis, the interference region of primary users is
represented by a circle with radius $R_I$. It is possible that the
interference contributions from multiple interferers outside of this
circle cause an outage at the secondary receiver $B$, but by
choosing a conservative interference range $R_I$, this possibility
is negligible~\citep{Jindal&Etal:08ITWC}. To verify the validity of
the interference range $R_I$, we take into account the accumulated
interference power from all primary transmitters in the simulation,
which directly determines the Signal-to-interference ratio (SIR) at
the secondary receiver $B$. In this case, a channel is an
opportunity for one pair of secondary users if there is no primary
receiver within the interference range $r_I$ of the secondary
transmitter $A$ and the total power of the interference $I_B$ from
all primary transmitters to the secondary receiver $B$ is below some
prescribed level $\tau_B$, $\ie$
$\overline{\mathbb{I}(A,r_I,\textrm{rx})}\cap (I_B<\tau_B)$. To
detect spectrum opportunity, the secondary transmitter $A$ uses an
energy detector, which is given by
\begin{eqnarray*}
I_A\overset{\mathcal{H}_0}{\underset{\mathcal{H}_1}{\lesseqgtr}}
\tau_A,
\end{eqnarray*}
where $I_A$ is the total received power at the secondary transmitter
$A$ and $\tau_A$ is the threshold of the energy detector. Let
$\Pi_{tx}$ denote the Poisson point process of primary transmitters,
$r^A_i$ and $r^B_i$ the distance from the $i^{\textrm{th}}$ primary
transmitter to $A$ and $B$ respectively, and $\alpha$ the path-loss
exponent, then we have that
\begin{eqnarray*}
I_A = \sum_{i\in \Pi_{tx}} P_{tx}\cdot (r^A_i)^{-\alpha},~~~~~~I_B =
\sum_{i\in \Pi_{tx}} P_{tx}\cdot (r^B_i)^{-\alpha}.
\end{eqnarray*}

Fig.~\ref{fig:ROC_p_001_simu} shows the simulated ROC of the energy
detector and the analytical ROC of LBT. We can see that reliable
opportunity detection can be achieved when
$\frac{p_{tx}}{P_{tx}}\rightarrow 0$ and
$\frac{p_{tx}}{P_{tx}}\rightarrow\infty$. In other words, the
asymptotic property of ROC (Theorem 2) still holds in this case.

As an initial analysis, we focus on one pair of secondary users in
this paper. In the scenario of multiple secondary users, our
definition of spectrum opportunity can still be applied, although
determining the interference range $r_I$ of secondary users needs
careful consideration due to the accumulation of the interference
powers from multiple secondary transmitters. Moreover, since the
reception at one secondary user can be affected by other secondary
transmissions, the detection of opportunities should probably be
performed cooperatively to avoid the possibly violent contention of
accessing the idle channel among secondary users. Our results can
easily be extended to incorporate the trade-offs between sampling.
vs opportunity to transmit by incorporation in a spatial Poisson
model. To specify the effect of multiple secondary users on power
control is an interesting future direction, and some preliminary
results about the impact of transmission power of secondary users on
the connectivity of the secondary network can be found
in~\citep{Ren&Etal:09MobiHoc}. \vspace{-1em}

{\small
\section*{Appendix A: Simplification of Double Integral for
$\Pr[\mathcal{H}_0]$} \vspace{-0.5em}

\renewcommand{\theequation}{A\arabic{equation}}
\setcounter{equation}{0}

By considering the shape of $\Sc_c(d,r_I+R_p,R_I)$
(see~\citep{RenZhaoTR0705}), we can use the independence of the
integrand on the angular coordinate $\theta$ to reduce the double
integral to a single integral with respect to the radial coordinate
$r$. Here we choose the secondary transmitter $A$ as the origin of
the polar coordinate system and the line from the secondary
transmitter $A$ to the secondary receiver $B$ as the polar axis. Due
to the symmetry of $\Sc_c(d,r_I+R_p,R_I)$ with respect to the polar
axis, there is a coefficient $2$ before each integral below. Let
$Q=\underset{\Sc_c(d,r_I+R_p,R_I)}{\iint}\frac{\Sc_I(r,R_p,r_I)}{\pi
R_p^2}r\mathrm{d}r\mathrm{d}\theta$, $I(r,R_p,r_I)=2\pi \int_{0}^{r}
t\frac{\Sc_I(t,R_p,r_I)}{\pi R_p^2}\mathrm{d}t$,
$\theta_0(r)=\arccos\left(\frac{d^2+r^2-R_I^2}{2dr}\right)$, then we
have two cases:
\begin{enumerate}
\item[$\Box$] Case 1: $R_I\geq d$.
\begin{itemize}
\item If $r_I+R_p\leq R_I-d$, then $Q=0$.
\item If $R_I-d<r_I+R_p<R_I+d$, then $Q=\pi r_I^2-I(R_I-d,R_p,r_I)-2\int_{R_I-d}^{r_I+R_p}\frac{\Sc_I(r,R_p,r_I)}{\pi
R_p^2}r \theta_0(r)\mathrm{d}r$.
\item If $r_I+R_p>R_I+d$, then $Q=\pi r_I^2-I(R_I-d,R_p,r_I)-2\int_{R_I-d}^{R_I+d}\frac{\Sc_I(r,R_p,r_I)}{\pi
R_p^2}r \theta_0(r)\mathrm{d}r$.
\end{itemize}

\item[$\Box$] Case 2: $R_I<d$.
\begin{itemize}
\item If $r_I+R_p\leq d-R_I$, then
$Q=\pi r_I^2$.
\item If $d-R_I<r_I+R_p<d+R_I$, then $Q=\pi r_I^2-2\int_{d-R_I}^{r_I+R_p}\frac{\Sc_I(r,R_p,r_I)}{\pi
R_p^2}r \theta_0(r)\mathrm{d}r$.
\item If $r_I+R_p>d+R_I$, then $Q=\pi r_I^2-2\int_{d-R_I}^{d+R_I}\frac{\Sc_I(r,R_p,r_I)}{\pi R_p^2}r
\theta_0(r)\mathrm{d}r$.
\end{itemize}
\end{enumerate}

The expression for $I(r,R_p,r_I)$ is obtained in an explicit form as
listed below.
\begin{enumerate}
\item[$\Box$] Case 1: for $r\leq |R_p-r_I|$, $I(r,R_p,r_I)=\pi r^2 \min
\{1,\frac{r_I^2}{R_p^2}\}$.
\item[$\Box$] Case 2: for $|R_p-r_I|<r<R_p+r_I$, then $I(r,R_p,r_I)=\frac{1}{2}\pi r_I^2+r^2\arccos\left(\frac{R_p^2+r^2-r_I^2}{2R_p r}\right)
+\frac{r_I^2 r^2}{R_p^2}\arccos\left(\frac{r_I^2+r^2-R_p^2}{2r_I
r}\right)-r_I^2\arcsin\left(\frac{r_I^2+R_p^2-r^2}{2r_I
R_p}\right)-\frac{r^2+r_I^2+R_p^2}{4
R_p^2}\sqrt{(r_I+R_p+r)(r_I+R_p-r)(r_I-R_p+r)(R_p-r_I+r)}$.
\item[$\Box$] Case 3: for $r\geq R_p+r_I$, then $I(r,R_p,r_I)=\pi r_I^2$.
\end{enumerate}

To compute the remaining integral $\int \frac{\Sc_I(r,R_p,r_I)}{\pi
R_p^2}r \theta_0(r)\mathrm{d}r$ numerically, we need an
explicit-form expression for $\Sc_I(r,R_p,r_I)$. Let
$r_1=\min\{R_p,r_I\}$ and $r_2=\max\{R_p,r_I\}$, then the expression
for $\Sc_I(r,R_p,r_I)$ is given by
\begin{enumerate}
\item[$\Box$] Case 1: for $0\leq r\leq r_2-r_1$, $\Sc_I(r,R_p,r_I)=\pi r_1^2$.
\item[$\Box$] Case 2~\citep{Two_circles}: for $r_2-r_1<r<
r_1+r_2$, \vspace{-0.5em}
\begin{eqnarray*}
\Sc_I(r,R_p,r_I)=r_2^2\cos^{-1}\left(\frac{r_2^2+r^2-r_1^2}{2r_2 r}\right)+r_1^2\cos^{-1}\left(\frac{r_1^2+r^2-r_2^2}{2r_1 r}\right)%
-r\sqrt{r_1^2-\left(\frac{r_1^2+r^2-r_2^2}{2r}\right)^2}.
\end{eqnarray*}
\item[$\Box$] Case 3 for $r \geq r_1+r_2$, $\Sc_I(r,R_p,r_I)=0$.
\end{enumerate}
\vspace{-1em}

\section*{Appendix B: Proof of Theorem 1}
\vspace{-0.5em}

\renewcommand{\theequation}{B\arabic{equation}}
\setcounter{equation}{0}

Since the integrand $\frac{\Sc_I(r,R_p,r_I)}{\pi R_p^2}$ and the
region of the double integral $\Sc_c(d,r_I+R_p,R_I)$ are both
increasing functions of $r_I$~\citep{RenZhaoTR0705}, T1.1 follows
from the monotonicity of the exponential function. Obviously, T1.3
also follows from the monotonicity of the exponential function.

We now prove T1.2. Recall the definition of spectrum opportunity,
\vspace{-0.5em}
\begin{eqnarray}
\Pr[\mathcal{H}_0] =
\Pr\{\overline{\mathbb{I}(A,r_I,\textrm{rx})}\cap
\overline{\mathbb{I}(B,R_I,\textrm{tx})}\}
= \Pr\{\overline{\mathbb{I}(A,r_I,\textrm{rx})}~|~\overline{\mathbb{I}(B,R_I,\textrm{tx})}\}\cdot \Pr\{\overline{\mathbb{I}(B,R_I,\textrm{tx})}\}.%
\end{eqnarray}
Based on Property 1, we have that for all $r_I>0$, \vspace{-0.5em}
\begin{eqnarray}
\Pr\{\overline{\mathbb{I}(B,R_I,\textrm{tx})}\} &=& \exp(-p\lambda
\pi
R_I^2), \\
\Pr\{\overline{\mathbb{I}(A,r_I,\textrm{rx})}~|~\overline{\mathbb{I}(B,R_I,\textrm{tx})}\}
&>&\Pr\{\overline{\mathbb{I}(A,r_I,\textrm{rx})}\}=\exp(-p\lambda
\pi r_I^2).
\end{eqnarray}
In the last inequality, we have used the fact that the logical
condition $\overline{\mathbb{I}(B,R_I,\textrm{tx})}$ reduces the
number of primary transmitters that can communicate with primary
users within distance $r_I$ to $A$, which results in a more probable
occurrence of $\overline{\mathbb{I}(A,r_I,\textrm{rx})}$. Then by
substituting (B2, B3) into (B1), we have
$\Pr[\mathcal{H}_0]>\exp[-p\lambda \pi(r_I^2+R_I^2)]$ for all $r_I$.

Obviously, $\Pr[\mathcal{H}_0]\leq \exp(-p\lambda \pi r_I^2)$.
Moreover, when $r_I\geq d+R_I+R_p$, we have
(see~\citep{RenZhaoTR0705}) \vspace{-0.5em}
\begin{eqnarray*}
\underset{\Sc_c(d,r_I+R_p,R_I)}{\iint}\frac{\Sc_I(r,R_p,r_I)}{\pi
R_p^2}r\mathrm{d}r\mathrm{d}\theta = \pi(r_I^2-R_I^2).
\end{eqnarray*}
So $\Pr[\mathcal{H}_0] = \exp(-p\lambda \pi r_I^2)$ when $r_I\geq
d+R_I+R_p$. \vspace{-1em}

\section*{Appendix C: Proof of Proposition 2}
\vspace{-0.5em}

\renewcommand{\theequation}{C\arabic{equation}}
\setcounter{equation}{0}

From (1), we have \vspace{-0.5em}
\begin{eqnarray}
P_F &=& \Pr\{\mathbb{I}(A,r_D,\textrm{tx})~|~\mathcal{H}_0\} =
1-\Pr\{\overline{\mathbb{I}(A,r_D,\textrm{tx})}~|~\mathcal{H}_0\},
\\%
P_{MD} &=&
\Pr\{\overline{\mathbb{I}(A,r_D,\textrm{tx})}~|~\mathcal{H}_1\}
= \Pr\{\overline{\mathbb{I}(A,r_D,\textrm{tx})}~|~\mathbb{I}(A,r_I,\textrm{rx})\cup \mathbb{I}(B,R_I,\textrm{tx})\}. %
\end{eqnarray}

Based on the definition of spectrum opportunity, we have %
\begin{eqnarray}
P_F &=& 1-\frac{\Pr\{\overline{\mathbb{I}(A,r_D,\textrm{tx})}\cap
\overline{\mathbb{I}(A,r_I,\textrm{rx})}\cap
\overline{\mathbb{I}(B,R_I,\textrm{tx})}\}}{\Pr[\mathcal{H}_0]}, \\%
P_{MD}&=&
\frac{\Pr\{\overline{\mathbb{I}(A,r_D,\textrm{tx})}\}-\Pr\{\overline{\mathbb{I}(A,r_D,\textrm{tx})}\cap
\overline{\mathbb{I}(A,r_I,\textrm{rx})}\cap
\overline{\mathbb{I}(B,R_I,\textrm{tx})}\}}{1-\Pr[\mathcal{H}_0]}.
\end{eqnarray}

Since $\Pr[\mathcal{H}_0]$ is known, we only need to calculate the two probabilities in the numerators. %

Based on Property 1, we have \vspace{-0.5em}
\begin{eqnarray}
& &\Pr\{\overline{\mathbb{I}(A,r_D,\textrm{tx})}\}=\exp(-p\lambda
\pi r_D^2), \\
& &\Pr\{\overline{\mathbb{I}(A,r_D,\textrm{tx})}\cap \overline{\mathbb{I}(B,R_I,\textrm{tx})}\} = \exp[-p\lambda(\pi(r_D^2+R_I^2)-\Sc_I(d,r_D,R_I))]. %
\end{eqnarray}

By using techniques similar to those used in obtaining
$\Pr\{\overline{\mathbb{I}(A,r_I,\textrm{rx})}~|~\overline{\mathbb{I}(B,R_I,\textrm{tx})}\}$
(see the proof of Proposition 1 in Sec.~\ref{subsec:Imp_Tx_Pr_H0}),
we have \vspace{-0.5em}
\begin{eqnarray}
\Pr\{\overline{\mathbb{I}(A,r_I,\textrm{rx})}~|~\overline{\mathbb{I}(A,r_D,\textrm{tx})}\cap
\overline{\mathbb{I}(B,R_I,\textrm{tx})}\}=\exp\left(-\underset{\Sc_c(d,r_I+R_p,R_I)-\Sc_{A2}}
{\iint}p\lambda \frac{\Sc_I(r,R_p,r_I)}{\pi
R_p^2}r\mathrm{d}r\mathrm{d}\theta\right).~~
\end{eqnarray}
Since $\Pr\{\overline{\mathbb{I}(A,r_D,\textrm{tx})}\cap
\overline{\mathbb{I}(A,r_I,\textrm{rx})}\cap
\overline{\mathbb{I}(B,R_I,\textrm{tx})}\}
=\Pr\{\overline{\mathbb{I}(A,r_I,\textrm{rx})}~|~\overline{\mathbb{I}(A,r_D,\textrm{tx})}\cap \overline{\mathbb{I}(B,R_I,\textrm{tx})}\} %
   \Pr\{\overline{\mathbb{I}(A,r_D,\textrm{tx})}\cap
   \overline{\mathbb{I}(B,R_I,\textrm{tx})}\}$,
by substituting (2, C5-C7) into (C3, C4), we obtain (6), (7).
\vspace{-1em}

\section*{Appendix D: Proof of Theorem 2}
\vspace{-0.5em}

\renewcommand{\theequation}{D\arabic{equation}}
\setcounter{equation}{0}

Consider first $\frac{r_I}{R_I}\rightarrow 0$. As discussed in
Sec.~\ref{subsec:Det_Oppor}, we choose $r_D=R_I$ in this case.
Recall that $\Sc_o=\Sc_c(d,r_D,R_I)\cap \Sc_c(d,r_I+R_p,R_I)$ as
given in Proposition 2, and then we have \vspace{-0.5em}
\begin{eqnarray*}
\underset{\frac{r_I}{R_I}\rightarrow
0}{\lim}|\Sc_c(d,r_I+R_p,R_I)|=0,~~~~
\underset{\frac{r_I}{R_I}\rightarrow 0}{\lim}|\Sc_o|=0,~~~~0\leq
\underset{\frac{r_I}{R_I}\rightarrow
0}{\lim}\underset{\Sc_o}{\iint}\frac{\Sc_I(r,R_p,r_I)}{\pi
R_p^2}r\mathrm{d}r\mathrm{d}\theta \leq
\underset{\frac{r_I}{R_I}\rightarrow
0}{\lim}\underset{\Sc_o}{\iint}\frac{r_I^2}{R_p^2}r\mathrm{d}r\mathrm{d}\theta=0.
\end{eqnarray*}
So by substituting the above limits into (6, 7) and applying the
continuity of functions in (6, 7), we conclude that
$P_F(r_D=R_I)\rightarrow 0$, $P_{MD}(r_D=R_I)\rightarrow 0$ as
$r_I/R_I\rightarrow 0$.

Consider next $\frac{r_I}{R_I}\rightarrow \infty$. As discussed in
Sec.~\ref{subsec:Det_Oppor}, we choose $r_D=r_I-R_I$ in this case.
Then we have \vspace{-0.5em}
\begin{eqnarray*}
\underset{\frac{r_I}{R_I}\rightarrow
\infty}{\lim}\underset{\Sc_c(d,r_I+R_p,R_I)}{\iint}\frac{\Sc_I(r,R_p,r_I)}{\pi
R_p^2}r\mathrm{d}r\mathrm{d}\theta = \pi (r_I^2-R_I^2),~~~~
\underset{\frac{r_I}{R_I}\rightarrow
\infty}{\lim}\underset{\Sc_o}{\iint}\frac{\Sc_I(r,R_p,r_I)}{\pi
R_p^2}r\mathrm{d}r\mathrm{d}\theta = \pi
\left[(r_I-R_I)^2-R_I^2\right].
\end{eqnarray*}
Similarly, we can show that $P_F(r_D=r_I-R_I)\rightarrow 0$,
$P_{MD}(r_D=r_I-R_I)\rightarrow 0$ as $r_I/R_I\rightarrow \infty$.
\vspace{-1em}

\section*{Appendix E: Derivation of Collision Probability $P_C$ in Proposition 3}
\vspace{-0.5em}

\renewcommand{\theequation}{E\arabic{equation}}
\setcounter{equation}{0}

Recall (13) and use the total probability theorem to obtain,
\begin{eqnarray*}
P_C &=& \frac{\Pr\{\overline{\mathbb{I}(A,r_D,\textrm{tx})}\cap
\mathbb{I}(A,r_I,\textrm{rx})\}}{\Pr\{\mathbb{I}(A,r_I,\textrm{rx})\}},
\\ &=&
\frac{\Pr\{\overline{\mathbb{I}(A,r_D,\textrm{tx})}\}-\Pr\{\overline{\mathbb{I}(A,r_I,\textrm{rx})}~|~\overline{\mathbb{I}(A,r_D,\textrm{tx})}\}\cdot
\Pr\{\overline{\mathbb{I}(A,r_D,\textrm{tx})}\}}{\Pr\{\mathbb{I}(A,r_I,\textrm{rx})\}}.
\end{eqnarray*}

It follows from Property 1 that
$\Pr\{\mathbb{I}(A,r_I,\textrm{rx})\}=1-\exp(-p\lambda \pi r_I^2)$,
$\Pr\{\overline{\mathbb{I}(A,r_D,\textrm{tx})}\}=\exp(-p\lambda \pi
r_D^2)$. Then by using arguments similar to those used in obtaining
$\Pr\{\overline{\mathbb{I}(A,r_I,\textrm{rx})}~|~\overline{\mathbb{I}(B,R_I,\textrm{tx})}\}$
(see the proof of Proposition 1 in Sec.~\ref{subsec:Imp_Tx_Pr_H0}),
we obtain the expression for
$\Pr\{\overline{\mathbb{I}(A,r_I,\textrm{rx})}~|~\overline{\mathbb{I}(A,r_D,\textrm{tx})}\}$,
and (14) follows immediately. Notice that from (14) we can show that
$P_C$ decreases as $r_D$ and $p\lambda$
increases~\citep{RenZhaoTR0705}. It follows that for fixed $r_I$,
$r_D$ decreases as $p\lambda$ increases in order to satisfy the
collision constraint. \vspace{-1em}

\bibliographystyle{ieeetr}

}

\newpage

\begin{figure}[htb]
\centerline{
\begin{psfrags}
\psfrag{A}[c]{ $A$} \psfrag{B}[c]{ $B~$} \psfrag{i}[c]{
~~~~~~~~~~~Interference} \psfrag{ra}[r]{ {$r_I$}} \psfrag{rb}[c]{
{$~~R_I$}} \psfrag{tx}[l]{ Primary Tx} \psfrag{rx}[l]{ Primary Rx}
\scalefig{0.5}\epsfbox{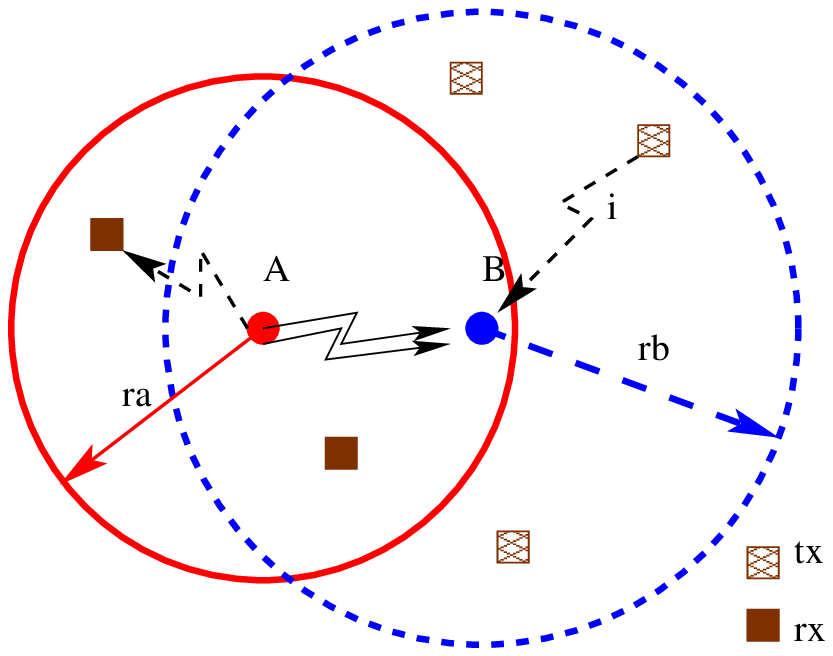}
\end{psfrags}}
\caption{Illustration of spectrum opportunity (secondary user $A$
wishes to transmit to secondary user $B$, where $A$ should watch for
nearby primary receivers and $B$ nearby primary transmitters).}
\label{fig:SO}
\end{figure}

\begin{figure}[htbp]
\centerline{
\begin{psfrags}
\psfrag{A}[c]{$A$} \psfrag{B}[c]{$B$} \psfrag{X}[c]{$X$}
\psfrag{Y}[c]{$Y$} \psfrag{Z}[c]{$Z$} \psfrag{tx}[l]{ Primary Tx}
\psfrag{rx}[l]{ Primary Rx} \psfrag{rI}[c]{$r_I$}
\psfrag{RI}[c]{$R_I$} \psfrag{rd}[c]{$r_D$} \psfrag{d}[c]{$d$}
\psfrag{Interference}[c]{\small{Interference}}
\scalefig{0.45}\epsfbox{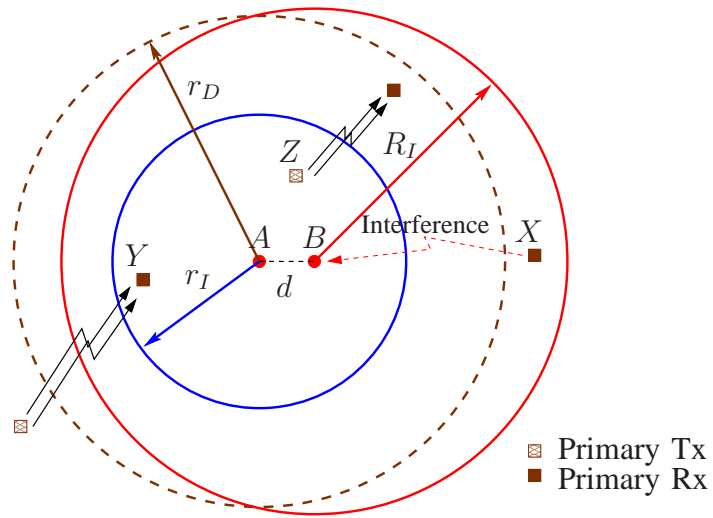}
\end{psfrags}
} \caption{Spectrum opportunity detection: A common approach that
detects spectrum opportunities by observing primary signals (the
exposed transmitter $Z$ is a source of false alarms whereas the
hidden transmitter $X$ and the hidden receiver $Y$ are sources of
miss detections).} \label{fig:OD}
\end{figure}

\begin{figure}[htbp]
\centerline{
\begin{psfrags}
\psfrag{d}[l]{ $P_D=1-P_{MD}$} \psfrag{e}[c]{ $P_F$} \psfrag{o}[c]{
$0$} \psfrag{a}[c]{ $1$} \psfrag{r1}[l]{ $r_D \downarrow$}
\psfrag{r2}[l]{ {$r_{D} \uparrow$}}
\scalefig{0.45}\epsfbox{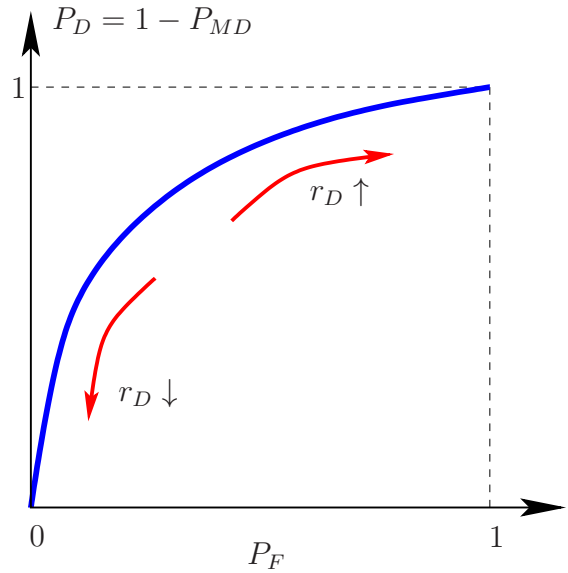}
\end{psfrags}
} \caption{ROC of LBT with perfect ears (the ROC is obtained by
varying the detection range $r_D$).} \label{fig:ROC}
\end{figure}

\begin{figure}[htbp]
\centerline { \hspace{0.2in}
\begin{psfrags}
\psfrag{r1}[c]{$r_1$} \psfrag{r2}[c]{$r_2$} %
\psfrag{A}[c]{$A$} \psfrag{B}[c]{$B$} \psfrag{d}[l]{$d$} %
\psfrag{Sc}[c]{$\Sc_c(d,r_1,r_2)$}
\psfrag{SI}[c]{$\Sc_I(d,r_1,r_2)$}
\scalefig{0.5}\epsfbox{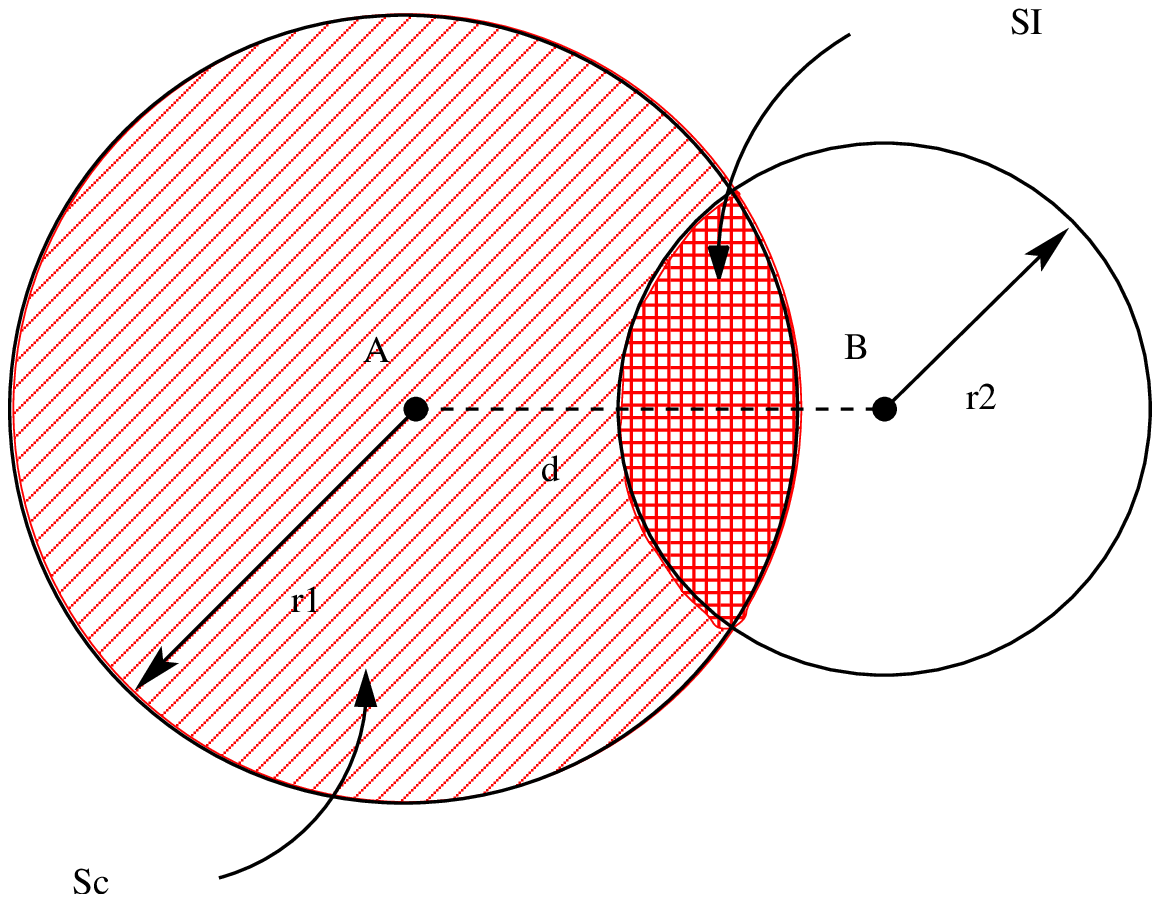}
\end{psfrags}
} \caption{Illustration of $\Sc_I(d,r_1,r_2)$ (the common area of
two circles with radius $r_1$ and $r_2$ and centered $d$ apart) and
$\Sc_c(d,r_1,r_2)$ (the area within a circle with radius $r_1$
centered at $A$ but outside the circle with radius $r_2$ centered at
$B$ which is distance $d$ away from $A$).} \label{fig:Sc_0}
\end{figure}

\begin{figure}
\begin{minipage}{3in}
\centerline{\scalefig{1.15} \epsfbox{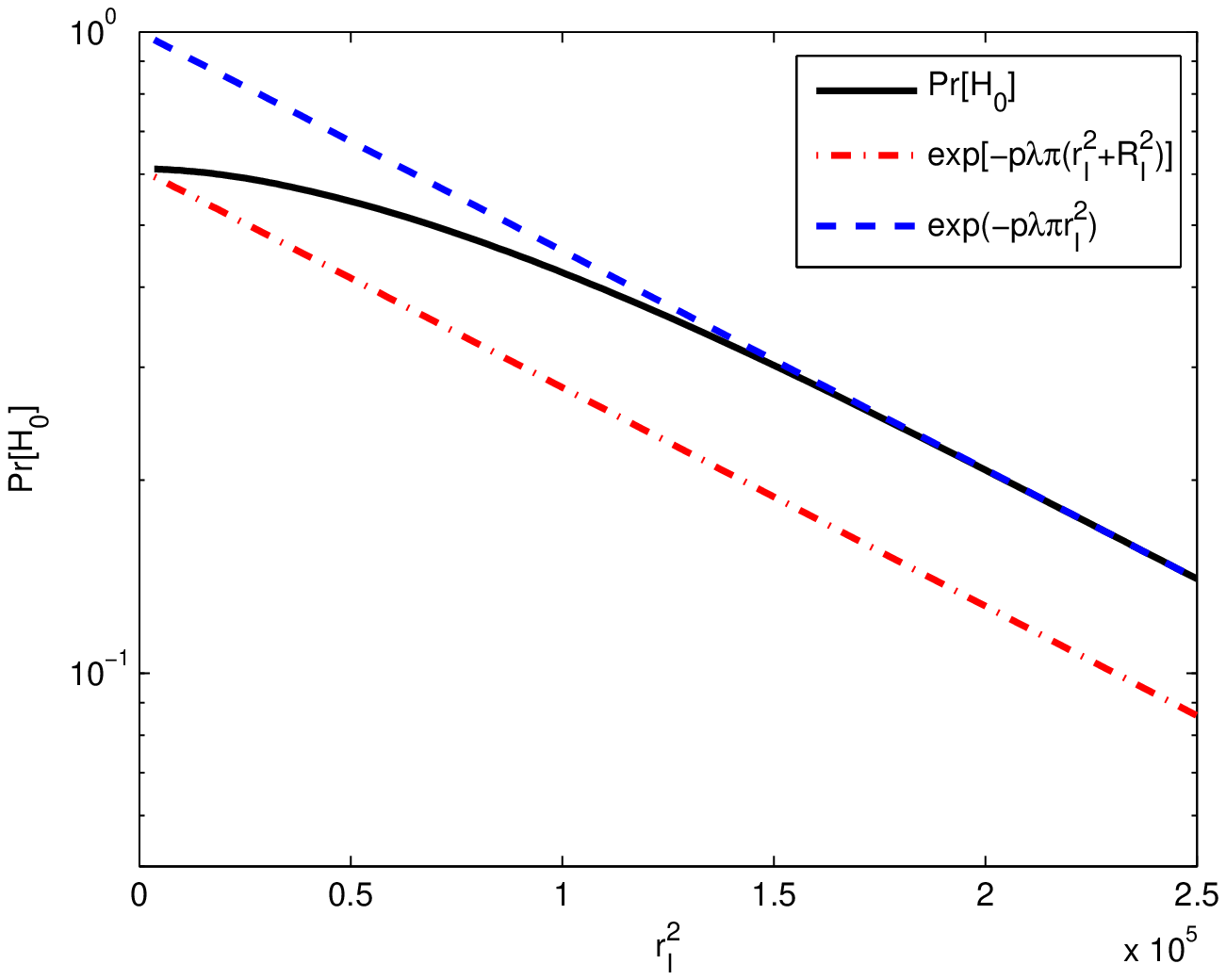}}
\centerline{(a)} %
\end{minipage}
~ %
\begin{minipage}{3in}
\centerline{\scalefig{1.15} \epsfbox{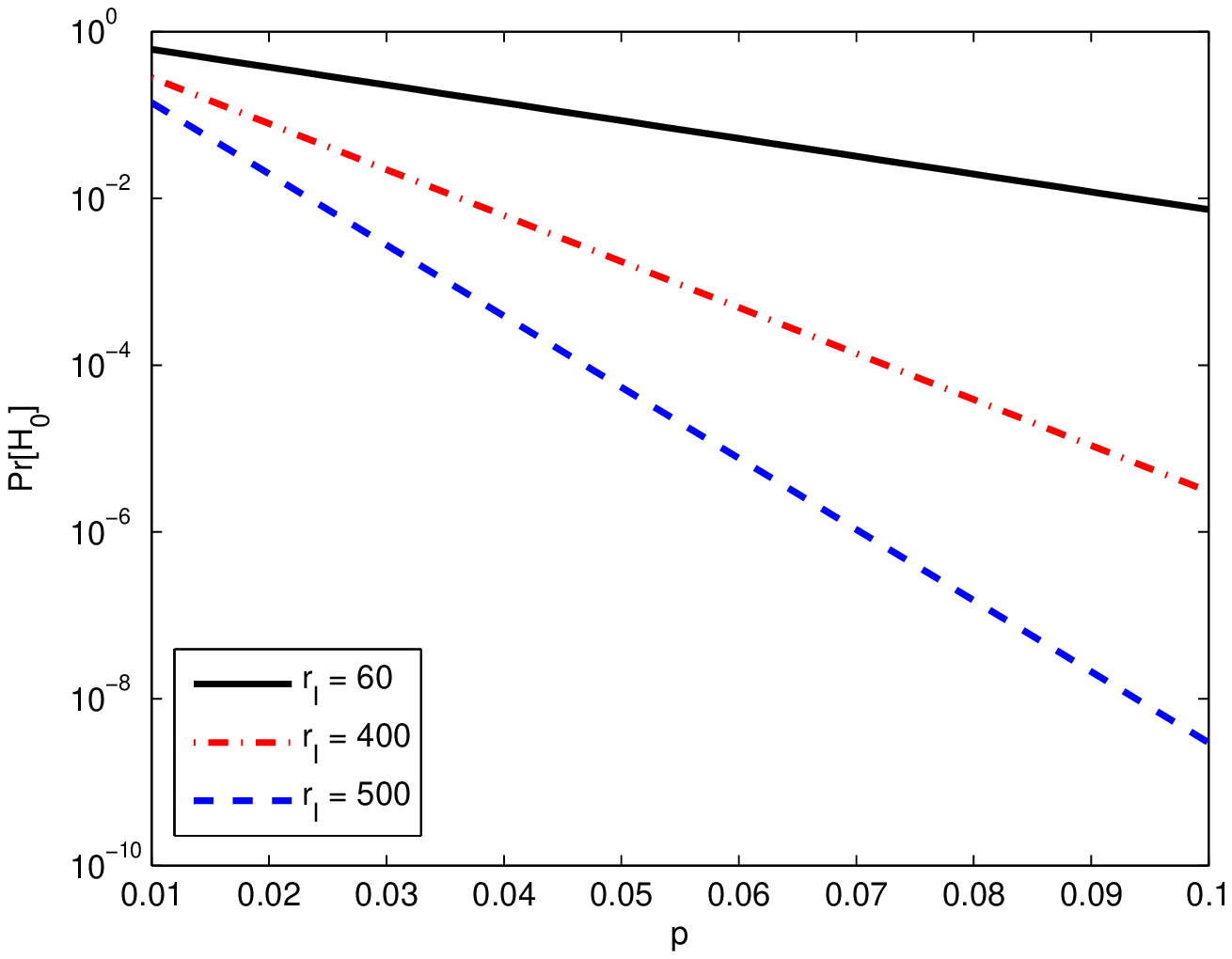}} %
\centerline{(b)}
\end{minipage}
\caption{(a) $\Pr[\mathcal{H}_0]$ vs $r_I$ ($p=0.01$,
$\lambda=10/200^2$, $d=50$, $R_p=200$, $R_I=250$); (b)
$\Pr[\mathcal{H}_0]$ vs $p$ ($\lambda=10/200^2$, $d=50$, $R_p=200$,
$R_I=250$). Note that both y-axes use log-scale.}
\label{fig:log_Pr_H0}
\end{figure}

\begin{figure}[htbp]
\centerline{\scalefig{0.7} \epsfbox{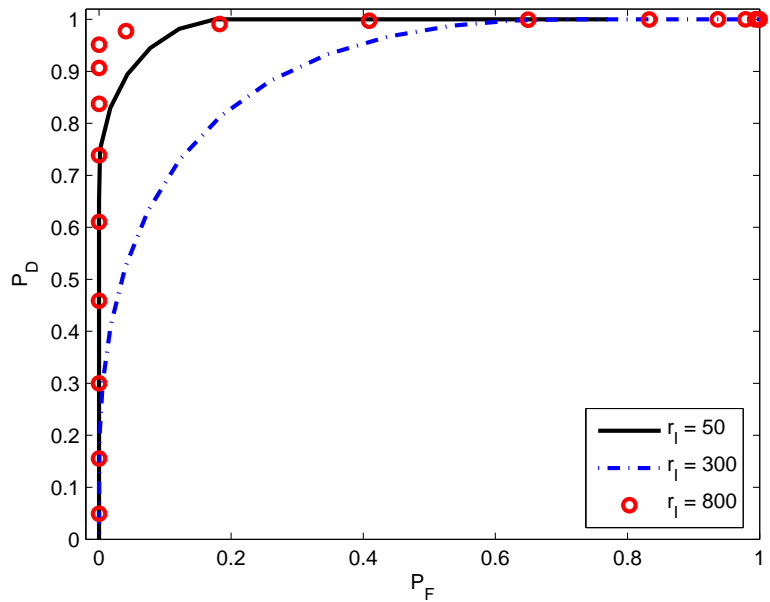}} %
\caption{ROC for LBT ($p=0.01$, $\lambda = 10/200^2$, $R_p=200$,
$R_I=R_p/0.8=250$, $d=0.9r_I$)} \label{fig:ROC_extreme_LBT}
\end{figure}

\begin{figure}[htbp]
\centerline{\scalefig{0.7} \epsfbox{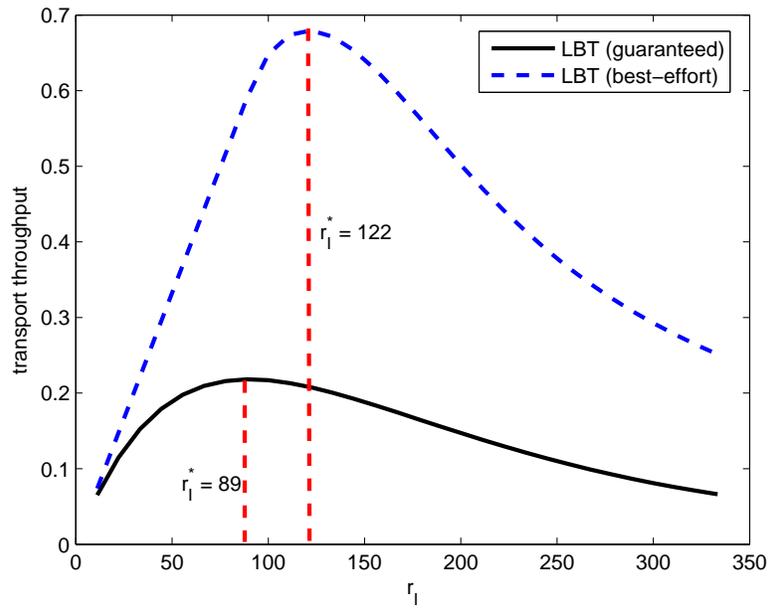}} %
\caption{Transport Throughput vs $r_I$ ($p=0.1$, $\lambda =
10/200^2$, $R_p=200$, $R_I=R_p/0.8=250$, $d=0.9r_I$, $\zeta=0.05$)}
\label{fig:trans_throughput_p_01_LBT}
\end{figure}

\begin{figure}[htbp]
\centerline{\scalefig{0.7} \epsfbox{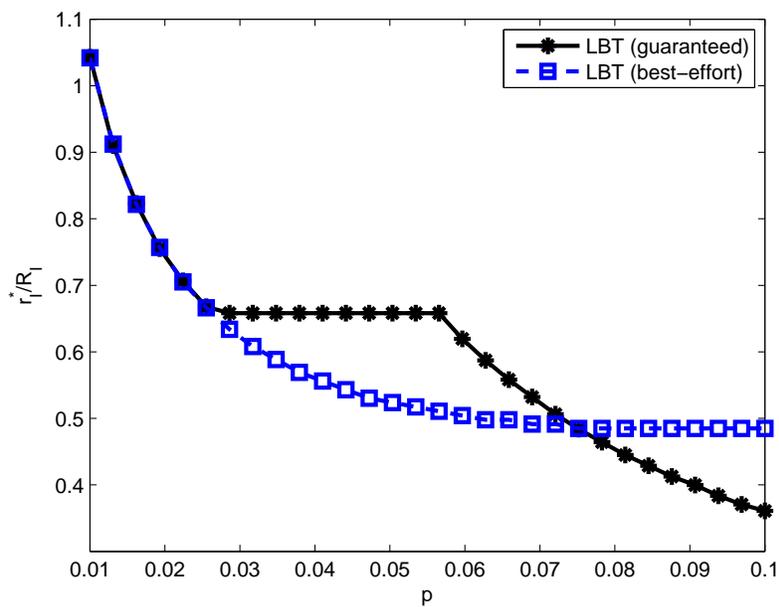}} %
\caption{Ratio between optimal interference range $r_I^*$ for
transport throughput and $R_I$ vs traffic load $p$ of primary users
($\lambda=10/200^2$, $R_p=200$, $R_I=R_p/0.8=250$, $d=0.9r_I$,
$\zeta=0.05$)} \label{fig:opt_r_I_LBT}
\end{figure}

\begin{figure}[htbp]
\centerline{\scalefig{0.7} \epsfbox{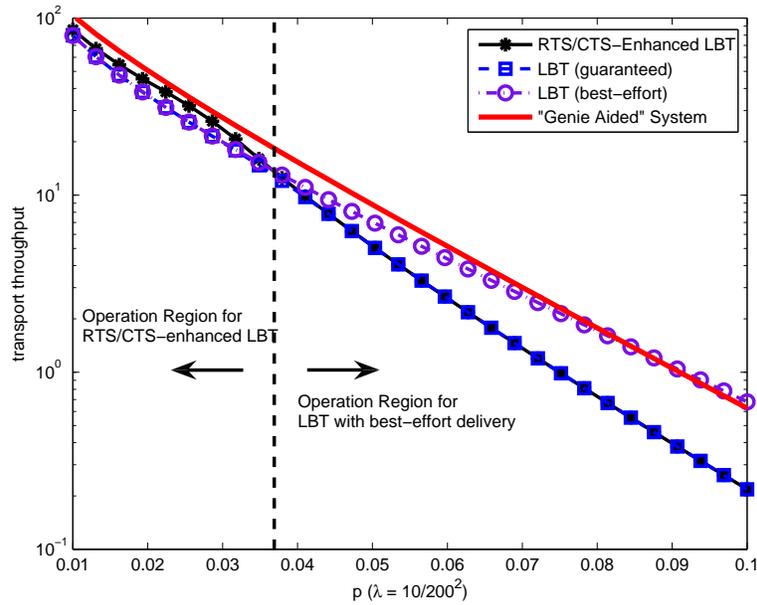}} %
\caption{Transport throughput vs traffic load $p$ of primary users
($\lambda=10/200^2$, $R_p=200$, $R_I=R_p/0.8=250$, $d=0.9r_I$,
$\zeta=0.05$)} \label{fig:trans_throughput_vs_p}
\end{figure}

\begin{figure}[htbp]
\centerline{\scalefig{0.7} \epsfbox{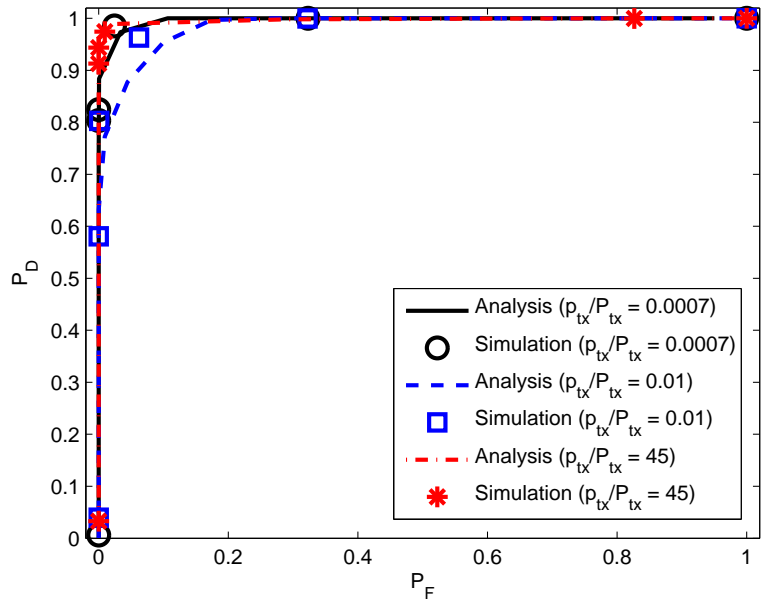}} %
\caption{Simulated ROC of energy detector vs. analytical ROC of LBT
($p=0.01$, $\lambda=10/200^2$, $P_{tx}=10$, $R_p=200$,
$R_I=R_p/0.8=250$, $\tau_B=P_{tx}\cdot R_I^{-\alpha}$, $\alpha = 3$,
$r_I=(p_{tx}/\tau_B)^{\frac{1}{\alpha}}$, $d=0.9\cdot r_I$)}
\label{fig:ROC_p_001_simu}
\end{figure}

\end{document}